\newif\ifarxiv
\newcommand{\p}[1]{\left( #1 \right)}
\newcommand{\cd}[0]{\cdot}
\newlist{todolist}{itemize}{2}
\setlist[todolist]{label=$\square$}
\newcolumntype{C}[1]{>{\centering\let\newline\\\arraybackslash\hspace{0pt}}m{#1}}
\newtheorem{lemma}{Lemma}
\newtheorem{definition}{Definition}
\newcommand{\nplayer}[0]{\ensuremath{M}}
\newcommand{\gendist}[0]{\ensuremath{\Theta}}
\newcommand{\mean}[0]{\ensuremath{\theta}}
\newcommand{\mue}[0]{\ensuremath{\mu_e}}
\newcommand{\var}[0]{\ensuremath{\sigma^2}}
\newcommand{\sampledist}[0]{\ensuremath{\mathcal{D}}}
\newcommand{\ndraw}[0]{\ensuremath{n}}
\newcommand{\err}[0]{\ensuremath{\epsilon^2}}
\newcommand{\expparam}[1]{\ensuremath{\mathbb{E}_{(\mean_{#1}, \err_{#1}) \sim \gendist}}}
\newcommand{\total}[0]{\ensuremath{N}}
\newcommand{\Ymf}[0]{\ensuremath{Y}}
\newcommand{\vm}[0]{\ensuremath{v}}
\newcommand{\alone}[0]{\ensuremath{\pi_l}}
\newcommand{\s}[0]{\ensuremath{s}}
\newcommand{\el}[0]{\ensuremath{\ell}}
\newcommand{\ns}[0]{\ensuremath{n_{\s}}}
\newcommand{\nlv}[0]{\ensuremath{n_{\el}}}
\newcommand{\si}[0]{\ensuremath{S}}
\newcommand{\li}[0]{\ensuremath{L}}
\newcommand{\col}[0]{\ensuremath{C}}
\newcommand{\partition}[0]{\ensuremath{\Pi}}
  \providecommand\BibTeX{{%
    \normalfont B\kern-0.5em{\scshape i\kern-0.25em b}\kern-0.8em\TeX}}}
\begin{document}

\title{Fairness in model-sharing games}

\author{Kate Donahue}
\email{kdonahue@cs.cornell.edu}
\affiliation{
  \institution{Department of Computer Science, Cornell University}
  \country{USA}
}

\author{Jon Kleinberg}
\email{kleinberg@cornell.edu}
\affiliation{
  \institution{Departments of Information and Computer Science, Cornell University}
  \country{USA}
}

\renewcommand{\shortauthors}{Donahue and Kleinberg}

\begin{abstract}
In many real-world situations, data is distributed across multiple self-interested agents. These agents can collaborate to build a machine learning model based on data from multiple agents, potentially reducing the error each experiences. However, sharing models in this way raises questions of fairness: to what extent can the error experienced by one agent be significantly lower than the error experienced by another agent in the same coalition? In this work, we consider two notions of fairness that each may be appropriate in different circumstances: \emph{egalitarian fairness} (which aims to bound how dissimilar error rates can be) and \emph{proportional fairness} (which aims to reward players for contributing more data). We similarly consider two common methods of model aggregation, one where a single model is created for all agents (uniform), and one where an individualized model is created for each agent. For egalitarian fairness, we obtain a tight multiplicative bound on how widely error rates can diverge between agents collaborating (which holds for both aggregation methods). For proportional fairness, we show that the individualized aggregation method always gives a small player error that is upper bounded by proportionality. For uniform aggregation, we show that this upper bound is guaranteed for any individually rational coalition (where no player wishes to leave to do local learning). 
\end{abstract}



 \keywords{fairness, federated learning, game theory}



\maketitle
\section{Introduction}

In traditional machine learning, an agent (such as a hospital, school, or company) trains a model based on data it has locally. In many cases, the agent may only have access to a limited amount of local data, which may mean that any model created with it will have relatively high error. However, in many real-world scenarios, multiple agents all are trying to solve similar problems using related local data sets. For example, consider a hospital aiming to train a model predicting patient outcomes. \emph{Local learning} would rely on the hospital's own data from observations of its patients. However, it may be the case that multiple hospitals in the same area all similarly have local patient data related to this problem. This raises the question - could it be possible to leverage this related data to help build a better model? 

Multiple research methods address this very problem. For example, data cooperatives and data trusts allow individuals to pool their data, often through the use of a trusted intermediary, in order to safely produce higher performance for those participating \cite{van2017potentials, LOMOTEY2022100075, hardjono2019data, hafen2014health}. Another method is federated learning, in which a central agent combines learned parameters (rather than the raw data itself) from multiple federating agents \cite{Li_2020, kairouz2019advances}. More broadly, edge computing studies ways of leveraging data distributed among multiple nodes with communication limitations \cite{10.1145/3464419}. All of these methods have been the subject of multiple game theoretical analyses \cite{blum2021one, hasan2021incentive, grimberg2021optimal,donahue2020model, donahue2021optimality, tu2021incentive, LOZANO2012558, 9144224, karimireddy2022mechanisms}. We will refer to situations where multiple agents cooperatively share data or model parameters as \emph{model-sharing games}, using terminology from  our prior work \cite{donahue2020model}.

In general, model sharing can provide benefits to participating agents by allowing them to produce models with lower error. However, this may itself produce additional problems, especially around fairness. Often different agents using the same model will see different levels of error, potentially as a result of differences in how much data each agent contributed. While this disparity might be undesirable, a \enquote{fair} solution can have multiple definitions, reflecting different values and goals of the particular situation. 

For example, consider a case with two hospitals $A$ and $B$, where $A$ has fewer relevant data points than $B$. It might be the case that $A$ has less data because it is a smaller hospital with fewer resources. In this case, a reasonable fairness goal might be for the model sharing solution to rectify this inequity, producing errors for $A$ and $B$ that are approximately equal. The goal of having error rates that are the same or similar is reflected in \emph{egalitarian fairness}. However, a different scenario might call for a different fairness goal. Consider the case where hospitals $A$ and $B$ have similar access to resources, but hospital $A$ has less data because it has devoted less effort to collecting data. In this case, a reasonable fairness objective might be to reward player $B$ for its higher contributions. One specific approach might be to have each agent's error rate be inversely proportional to the amount of data it contributes: this motivates \emph{proportional fairness}. While egalitarian and proportional fairness can't be simultaneously achieved, each definition reflects goals that could be reasonable for different situations. Multiple papers separately discuss notions related to egalitarian or proportional fairness, in data cooperatives \cite{8761948} or in federated learning \cite{mohri2019agnostic,li2019fair, Lyu2020CollaboratfiveFI, Zhang2020HierarchicallyFF, li2021ditto, fan2022improving}.

{\textbf{The present work:}}\\
In this work, we provide the first theoretical analysis and comparison of egalitarian and proportional fairness. Our paper is primarily analytical rather than prescriptive: we describe patterns of fairness that are and aren't satisfied for commonly-used methods of model aggregation. In order for this analysis to be tractable theoretically, we build on the simplified model of model-sharing games we first proposed in \citet{donahue2020model}, which we summarize later in this work for convenience of the reader.
\begin{itemize}
    \item For \textbf{egalitarian fairness}, we give a \emph{tight multiplicative bound on how dissimilar the error rates of collaborating agents can be}. This bound varies based on the number of samples the largest agent contributes, as well as a property of the learning task related to the ratio of noise and bias. We show that this bound holds for a wide range of model-sharing methods satisfying a few straightforward properties. 
    \item For \textbf{proportional fairness}, our results vary based on the model creation method. We identify {\em proportionality} as a natural baseline that emerges from the analysis: Proportionality would mean that if agent $A$ has $c$ times the number of samples as agent $B$, it would obtain $1/c$ times as much error; this is the property that would hold, for example, for the error rates if none of the agents shared any data. For fine-grained learning, where a personalized model is made for each agent, we show any coalition formation results in the sub-proportional error: that is, small players end up with error rates that are lower than proportionality would suggest. By contrast, for uniform model-sharing, where a single model is created for the entire coalition, we show that in general there exist coalitions that could display error rates that are arbitrarily disconnected from proportionality. That is, the error an agent experiences, compared to the error another agent experiences, could be either more or less than what the ratio of the size of their datasets suggests. However, we show that for uniform model-sharing, any coalition where small players experience error that is larger than proportionality fails to be \emph{individually rational}: at least one agent could achieve lower error by leaving the coalition for local learning. 
\end{itemize}

Taken together, these results indicate that smaller players (those who contribute fewer samples) in general obtain higher errors in model-sharing games. However, both fairness notions show that this disparity in performance is bounded: either by the multiplicative bound in egalitarian fairness, or by the model aggregation method in proportional fairness. Again, the focus of our work is \emph{descriptive}: while these fairness results may or may not be desirable in different situations, our goal is to describe certain guarantees that arise from different model-sharing methods. 

\section{Related work and fairness definitions}\label{sec:fairdef}

Fairness in model-sharing is an especially rich topic: naturally, no one set of definitions is going to resolve the complex questions it raises. Much of the literature focuses on fairness in federated learning: for a survey of fairness issues here and methods to address them, refer to \cite{shi2021survey}. In this work, our goal is to theoretically evaluate two dominant notions of fairness in collaboratively building models\footnote{Naturally, some fairness notions fall outside of these two notions, such as \cite{chaudhury2022fairness}, which proposes and analyzes definitions fairness based on core stability of coalitions.  }.
 
\subsection{Egalitarian fairness}
We adopt the terminology from \citet{Xu2020TowardsBA} calling the first \enquote{egalitarian}. Egalitarian notions of fairness roughly revolve around ensuring agents have levels of error that are all roughly comparable. For example, \citet{mohri2019agnostic} works to minimize the error of the worst-off player, while \citet{Du2020FairnessawareAF} considers a similar notion with an explicit fairness penalty. By contrast, \citet{li2019fair, li2021ditto} has a more flexible goal of minimizing the spread of error rates. \citet{abay2020mitigating} analyzes multiple causes of unfairness in federated learning and potential solutions. Egatliarian fairness has also been explored in data cooperatives \cite{8761948}. Other papers that consider notions related to egalitarian fairness include \cite{zeng2021improving, zhang2021unified, papadaki2021federating, kanaparthy2021fair}. 

The motivation for this notion of fairness is that agents may differ in how much data they have for reasons beyond their control: persistent inequities in resources, for example. The goal of model-sharing, in this view, is to try and ameliorate these injustices by bringing error rates more close to each other. We formalize this idea in Definition \ref{def:egal}.  
\begin{definition}\label{def:egal}
If a player $i$ in a coalition $\col$ experiences error $err_i(\col)$, then the coalition $\col$ satisfies $\lambda$-egalitarian fairness if for some constant $\lambda \geq 1$,
$$\frac{err_i(\col)}{err_j(\col)} \leq \lambda \ \forall i, j \in \col$$
\end{definition}

\subsection{Proportional (collaborative) fairness}
By contrast, proportional notions of fairness generally focus on ensuring that different agents are rewarded for contributing more to the overall model \cite{fan2022improving}. For example, \citet{Lyu2020CollaborativeFI,Xu2020TowardsBA} both use gradient-based methods to detect players contributing useful model updates (as opposed to noise) and reward them with more powerful variants of the model. \citet{Song2019ProfitAF} takes a similar approach with the goal of allocating profit. \citet{Zhang2020HierarchicallyFF} proposes having a trusted party track the amount of effort each player is contributing, which will then be used to reward or penalize agents with access to different models. 

Under the view of proportional fairness, the number of samples a player contributes to the shared model is under their control, so we wish to reward players who contribute more. In Definition \ref{def:prop}, we formalize the idea of proportional fairness by describing how error rates relate to the number of samples that a player contributes.

\begin{definition}\label{def:prop}
A coalition $\col$ with satisfies $\alpha$-proportional fairness if for all pairs of players $i$ and $j$, with $\ndraw_i \leq \ndraw_j$, some $\alpha >0$,
$$err_i(\col) \leq \alpha \cd  \frac{\ndraw_j}{\ndraw_i}\cd err_j(\col) \ \forall i, j \in \col \ \ndraw_i \leq \ndraw_j$$
\end{definition}

Note that $\alpha<1$ means that smaller players get \emph{lower} error than proportionality would suggest, while $\alpha>1$ means that smaller players get \emph{higher} error than proportional. Either case may be desirable under different situations: we will restrict our attention to describing when either is possible to achieve.

\section{Model and assumptions}\label{sec:modelassump}

\subsection{Model sharing methods}
Next, we will present the framework we will use to analyze model-sharing. We say that there are $\nplayer$ total agents (sometimes referred to as players). Each agent $i \in [\nplayer]$ has $\ndraw_i$ data points, labeled by its true labeling distribution $g(\mean_i)$, where $\mean_i$ are their true local parameters and $g(\cd)$ is some labeling function (e.g. mean estimation, linear regression, deep learning). A player's goal is to learn a model $g(\hat \mean_i)$ with low expected error on its own distribution. If a player opts for local learning, then it uses its local estimate of these parameters $\hat \mean_i$ to predict future data points.  If a set of players $\col$ are building a model together, we say that they are in a \emph{coalition}  or \emph{cluster} together. An \emph{aggregation method} describes how a coalition aggregates information from participating agents to. For example, in federated learning, a federating method describes how players combine their learned parameters into federating models. The most common federating method is the weighted average method in Equation \eqref{eq:avged}, often called \enquote{vanilla} or \enquote{uniform} federated learning (\cite{mcmahan2016communicationefficient}): 
\begin{equation}\label{eq:avged}
 \hat \mean_\col = \frac{1}{\sum_{i \in \col} \ndraw_i} \cd \sum_{i \in \col} \ndraw_i \cd \hat \mean_i  
\end{equation}
Alternative ways of federation might involve customizing the model for individuals, as in domain adaptation. For example, \cite{donahue2020model} models three methods of federation: vanilla, as well as two models of domain adaptation, fine-grained and coarse-grained. In fine-grained federation, a personalized model is made for each player by weighting the contributions from every other player:
\begin{equation}\label{eq:fineavged}
 \hat{\mean}_j^{\vm} = \sum_{i=1}^{\nplayer}\vm_{ji}\cd \hat{\mean}_i 
\end{equation}
with $\sum_{i=1}^{\nplayer}\vm_{ji} = 1$. \enquote{Optimal} fine-grained federation occurs when these $\vm_{ji}$ weights are set so as to minimize error for each player and is the case we will consider throughout the rest of this paper. 

A player $i$ obtains error $err_i(\col)^m$, where $m$ denotes the federating method being used. Note that in general $err_i(\col)^m \ne err_j(\col)^m$: players in the same coalition may experience different error rates. For example, if player $i$ has more samples than player $j$, then $\hat \mean_\col$ as calculated in vanilla federation in Equation \eqref{eq:avged} will be weighted more towards player $i$, meaning that player $i$ will have lower expected error than $j$. While these aggregation methods are motivated by federated learning, they could also be used to model combining methods in other settings, such as data coalitions. 

\subsection{Theoretical model from \citet{donahue2020model}}\label{sec:existingmodel}

All of the previous definitions are satisfied for any model-sharing scenario. However, in order for our theoretical fairness analysis to be feasible, we require a model that gives a functional form for $err_i(\col)$: the error each player experiences in a coalition. Moreover, we require a model that gives exact errors for each player, rather than bounds: these are needed in order to be able to analyze the ratio of the errors two players experience, for example.  

We opt to use the model developed in our prior work \citet{donahue2020model}, which produces the closed-form error value seen in Lemma \ref{lem:error} below. While we work within this model, we emphasize that \cite{donahue2020model} asked different questions from this paper's focus: our prior work focused on developing the federated learning model and analyzing the stability of federating coalitions, while our current work analyzes optimality and Price of Anarchy. (Separately, in \cite{donahue2021optimality} we similarly build on \cite{donahue2020model}, but with the orthogonal goal of analyzing the \emph{optimality} of federated learning: which arrangements minimize total error.)

\begin{lemma}[Lemma 4.2, from \cite{donahue2020model}]\label{lem:error}
Consider a mean estimation task as follows: player $j$ is trying to learn its true mean $\mean_j$. It has access to $\ndraw_j$ samples drawn i.i.d. $\Ymf \sim \sampledist_j(\mean_j, \err_j)$, a distribution with mean $\mean_j$ and variance $\err_j$. Given a population of players, each has parameters $(\mean_j, \err_j) \sim \Theta$ taken from some distribution $\Theta$\footnote{Note that this does \emph{not} imply that each player draws their parameters i.i.d, but rather that there exists some distribution of true model parameters among players. For example, $\Theta$ might be an (unknown) distribution of true coronavirus rates among towns, reflecting variability in true rates. By contrast, $\sampledist_j(\mean_j, \err_j)$ would be the number of coronavirus cases in a particular town per day, which reflects noise in data collection. }  A coalition $\col$ using uniform federation produces a single model based on the weighted average of local means (Eq. \eqref{eq:avged}). Then, the expected mean squared error player $j$ experiences in coalition $\col$ using vanilla federation is:
\begin{equation}\label{eq:err}
err_j(\col) = \frac{\mue}{\sum_{i \in \col} \ndraw_i} + \var \cd \frac{\sum_{i \in \col, i \ne j}\ndraw_i^2 + \p{\sum_{i\in \col, i \ne j}\ndraw_i}^2}{\p{\sum_{i \in \col}\ndraw_i}^2}
\end{equation}
where $\mue = \expparam{i}[\err_i]$ (the average noise in data sampling) and $\var = Var(\mean)$ (the average dissimilarity or bias between the true means of players). 
\end{lemma}

\begin{lemma}[Lemmas 4.1, 7.1, from \cite{donahue2020model}]\label{lem:errorfine}
Consider the same setting as in Lemma \ref{lem:error}, but where the coalition $\col$ was instead using personalized (fine-grained) federation. Then, the expected mean squared error player $j$ experiences in the coalition with fine-grained federation is given by: 
\begin{equation}\label{eq:errfine}
\mue\sum_{i=1}^{\nplayer}\vm_{ji}^2\cd \frac{1}{\ndraw_i} + \p{\sum_{i\ne j}\vm_{ji}^2 + \p{\sum_{i\ne j}\vm_{ji}}^2}\cd \var
\end{equation}
\enquote{Optimal} fine-grained federation occurs when each $\vm_{ji}$ parameter is set so as to minimize error for player $j$. Note that every player can simultaneously minimize their errors: player $k \ne j$ can set their parameters $\{\vm_{ki}\}$ separately from $\{\vm_{ji}\}$. The values of $\vm_{ji}$ that minimize error are given by: 
$$\vm_{jj} = \frac{1 + \var \sum_{i\ne j}\frac{1}{V_i}}{1 + V_j\sum_{i\ne j}\frac{1}{V_i}}\quad \vm_{jk} = \frac{1}{V_k}\cd \frac{V_j-\var}{1 + V_j \sum_{i\ne j}\frac{1}{V_i}} \quad k\ne j$$
for $V_i = \var + \frac{\mue}{\ndraw_i}$. 
\end{lemma}
Note that \cite{donahue2020model} also analyzes a $d$-dimensional linear regression game and in particular show that the cost functions for this game are equivalent to the costs for mean estimation, so long as the number of samples $n$ is much larger than the dimension of the problem $d$. As a result, all of our findings directly extend from mean estimation to linear regression. Additionally, theoretical and empirical work (\cite{hestness2017deep, rolf2021representation}) indicates that power law rules for error curves are commonly used, indicating that functional forms similar to those proposed in \cite{donahue2020model} may be much more broadly applicable. 

We use some of the same modeling assumptions as \cite{donahue2020model}. For example, we assume that each player has a goal of obtaining a model with low expected test error on its personal distribution. For notation, we use $\col$ to refer to a coalition of federating agents and $\partition$ to refer to a collection of coalitions that partitions the $\nplayer$ agents. We will use $\total_{\col}$ to refer to the total number of samples present in coalition $\col$: $\total_{\col} = \sum_{i \in \col}\ndraw_i$. We will refer to $\frac{\mue}{\var}$ as the \emph{noise/bias ratio}, given that it reflects the level of noise in a learning problem, as compared to the average amount of bias between the true parameters of different players. We also assume that each agent is always \enquote{available} to participate in model-sharing, meaning that our setting is more close to cross-silo federated learning, as opposed to cross-device \cite{wu2022motley}. 

Finally, it is worth emphasizing key differences between this current work and \cite{donahue2020model}. The focus of \cite{donahue2020model} is defining a theoretical model of federated learning and taking a game-theoretic analysis of the stability of such an arrangement. While we use the theoretical model we developed in \cite{donahue2020model}, in this work we analyze completely distinct questions: what are reasonable definitions of \enquote{fairness} in federated learning? When are certain fairness notions guaranteed to hold?  Additionally, this paper is in some ways more general: while some key results in \cite{donahue2020model} only allow players to have two different numbers of samples (\enquote{small} or \enquote{large}), every result in this work holds for arbitrarily many players with arbitrarily many different numbers of samples.

\subsection{Research ethics and social impact}\label{sec:ethics}
In Section \ref{sec:fairdef} we began our discussion on fairness and ethics, which we will continue here. This work is primarily theoretical, but touches on questions with multiple possible societal implications. As we discussed previously, there are many reasonable definitions of fairness in model-sharing games (each of which could impact participants differently). In this work, we narrow our focus to two of the most commonly used and natural definitions, but it is important to note that this by no means is the last word on fairness in this area. For example, our prior work \cite{donahue2020model} derived a cost function for model-sharing games assuming that each agent wishes to minimize their mean-squared error. However, it is possible that agents might care about metrics besides \emph{expected} error. For example, \cite{coopervariance} explores how disparities in \emph{variance} of error rates between groups can cause significant harms, including outcomes that feel arbitrary. Beyond this, fairness could mean something besides error rates, such as equality of access to opportunities or access to recourse.

Besides fairness, model and data sharing also has privacy concerns. While federated learning involves sharing model parameters, rather than raw data, under some circumstances it still may be possible to learn information about the data distribution of specific players \cite{mothukuri2021survey, boenisch2021curious}. Some papers have explored the game-theoretic implications of privacy concerns in federated or data-sharing settings \cite{cummings2022optimal, bie2022private, li2023differentially, 9996132}, as well as \cite{kang2023fair}, which builds on the model-sharing games framework from \cite{donahue2020model}. Finally, one core assumption of this paper is that the problem at hand will be improved by a machine learning model with lower error. In many situations, a different approach might be more suitable, such as an algorithm that is more explainable or a non-machine learning solution. 

\section{Egalitarian fairness}\label{sec:egal}

\subsection{Motivation}

In this section, we will analyze \emph{egalitarian fairness}, where a typical goal is to ensure that all players have similar levels of error. This may be motivated by the fact that players may differ in their circumstances (differing amounts of data) for reasons out of their control. For example, imagine a low-resource hospital that is collaborating with a high-resource hospital. The low-resource hospital may have access to less local data for historical and/or continuing reasons of discrimination and unequal access to technology and other resources. This motivation would suggest that model-sharing algorithms should seek to close the gap the low resource and high resource hospital, making their error as close to equal as possible. In some cases, it may even be desirable for the low-resource (small) hospital to have lower error than the high-resource hospital, potentially as a way to begin to compensate for its lack of resources. In this section, we give theoretical bounds for what types of egalitarian fairness levels can be obtained. 

\subsection{Tight, widely-applicable upper bound}

Theorem \ref{thrm:egalbound} shows that there exists a tight multiplicative bound for egalitarian fairness: that is, the error ratio between small and large players is upper bounded. This holds for any \emph{modular function}, where definition \ref{def:modular}, below, gives a formal definition of what this means. Later, we will directly show that both vanilla and fine-grained federation are modular. However, because the definition of \enquote{modular} encompasses multiple desirable properties of model-aggregation methods, we expect that a wide range of model-aggregation methods would satisfy them. Thus, Theorem \ref{thrm:egalbound} is best viewed as a relatively broad theorem potentially providing insight into multiple different aggregation scenarios.  

\begin{restatable}{theorem}{egalbound}
\label{thrm:egalbound}
Any \enquote{modular} model-sharing method (as defined in Definition \ref{def:modular}) satisfies the egalitarian error ratio bound: 
$$\frac{err_i(\col)}{err_j(\col)} \leq 2 \cd c + 1 \ \forall i, j \in \col$$
so long as the largest player size is $ \leq c \cd \frac{\mue}{\var}$ (for $\frac{\mue}{\var}$ noise/bias ratio, for any $c > 0$). Moreover, this bound is tight (up to an additive factor of $\epsilon$).
\end{restatable}

\begin{definition}\label{def:modular}
Consider any coalition $\col$ with players $\s, \el \in \col$ such that $\ns \leq \nlv$. Then, a model-sharing method is called modular if it satisfies the following five properties: 

\textbf{Property 1:} The large player always has lower error than the small player: 
$$\frac{err_s(\col)}{err_l(\col)} \geq 1$$
\textbf{Property 2:}
The worst-case situation for the error ratio (the ratio of the small player's error to the large player's error) is always in the two-player case. That is, 
$$\frac{err_s\p{\col}}{err_l\p{\col}} \leq \frac{err_s\p{\{\ns, \nlv\}}}{err_l\p{\{\ns, \nlv}\}} $$
\textbf{Property 3:} The error ratio increases as the large player gets more samples: the small player's error either increases or else decreases more slowly than the large player's error decreases. That is, 
$$\frac{\partial}{\partial\nlv}\frac{err_s\p{\{\ns, \nlv\}}}{err_l\p{\{\ns, \nlv}\}} \geq 0$$
\textbf{Property 4:} The error ratio decreases as the small player gets more samples: the large player's error either increases or else decreases more slowly than the small player's error decreases. That is, 
$$\frac{\partial}{\partial\ns}\frac{err_s\p{\{\ns, \nlv\}}}{err_l\p{\{\ns, \nlv}\}} \leq 0$$
\textbf{Property 5:} As $\frac{\ns}{\nlv} \rightarrow 0$ (the large player has many more samples than the small player), the error ratio converges to the following fraction: 
$$\lim_{\frac{\ns}{\nlv} \rightarrow 0}\frac{err_s\p{\{\ns, \nlv}\}}{err_l\p{\{\ns, \nlv}\}} =\frac{\frac{\mue}{\nlv} + 2 \var}{\frac{\mue}{\nlv}}$$
This fraction reflects the fact that the large player's error approaches that of local learning, while the small player's error approaches that of simply using the model of the large player without contributing any data. 
\end{definition}

First, it may be useful to discuss the properties in Definition \ref{def:modular}. Property 1 is simply stating that, for any pair of players both federating in the same coalition, the smaller of the two always has higher error. The remaining four properties contribute to the upper bound (and showing that this bound is tight). First, Property 2 states that the error ratio between any two players is highest when they are federating in a two-player coalition (without any other players). This property is especially helpful because the two-player coalition is much simpler to analyze. Properties 3 and 4 consider this two-player coalition, showing that the error ratio is highest when the large player has many samples and the small player has very few. Finally, Property 5 describes the limit of the error ratio in the case where the large player has many more samples than the small player, stating that it must go to a fixed ratio depending on the noise-bias ratio, as well as the number of samples the large player has. Roughly speaking, the denominator of this term ($\frac{\mue}{\nlv}$) approximates the error that the large player $\ell$ obtains while federating alone, while the numerator reflects the error that the small player obtains while using the model created almost entirely by the large player ($\frac{\mue}{\nlv}$, plus a bias term of $2 \var$). 

Given these properties, it becomes straightforward to prove Theorem \ref{thrm:egalbound}: 
\begin{proof}[Proof of Theorem \ref{thrm:egalbound}]

Proving the $2 c +1$ bound is given by applying the four properties: 
    $$\frac{err_s\p{C}}{err_l\p{C}} \leq \frac{err_s\p{\{\ns, \nlv\}}}{err_l\p{\{\ns, \nlv}\}} \leq \frac{\frac{\mue}{\nlv} + 2 \var}{\frac{\mue}{\nlv}} \leq  = \frac{\frac{\var}{c} + 2 \var}{\frac{\var}{c}} =2 c + 1$$
Next, we will show that the bound is tight. Our goal is to show that, for all $\epsilon >0$, there exists some set of parameters $\ns \geq 1, \nlv \leq c \cd \frac{\mue}{\var}, \mue, \var \geq 0$ such that 
$$\frac{err_s(\{\ns, \nlv\})}{err_l(\{\ns, \nlv\})} \geq 2 c + 1 - \epsilon$$
The $\epsilon, \delta$ version of Property 4 can be written as: for all $\epsilon > 0$, there exists a $\delta > 0$ such that 
$$\frac{\ns}{\nlv} < \delta \quad \Rightarrow \quad \frac{\frac{\mue}{\nlv} + 2 \var}{\frac{\mue}{\nlv}} - \frac{err_s\p{\{\ns, \nlv}\}}{err_l\p{\{\ns, \nlv}\}}  < \epsilon$$
Setting $\ns = 1, \nlv = c \cd \frac{\mue}{\var}$ gives that Property 4 becomes: 
$$\frac{\var}{c \cd \mue} < \delta \quad \Rightarrow \quad 2 c + 1 - \frac{err_s\p{\{\ns, \nlv}\}}{err_l\p{\{\ns, \nlv}\}}  < \epsilon$$
$$\frac{\var}{c \cd \mue} < \delta \quad \Rightarrow \quad \frac{err_s\p{\{\ns, \nlv}\}}{err_l\p{\{\ns, \nlv}\}} - \epsilon > 2 c + 1 - \epsilon$$
Because $\var, \mue$ are free parameters, we can set $\var =1, \mue > \frac{1}{c \cd \delta}$ in order to satisfy the precondition. 
\end{proof}

\subsection{Proving aggregation methods are modular}
The properties in Definition \ref{def:modular} are relatively logical, desirable properties we would expect many aggregation methods would satisfy. In this section, we will theoretically prove that there exist multiple federating methods that are modular. Specifically, Theorem \ref{thrm:unifwell} shows that uniform federation is modular and Theorem \ref{thrm:finewell} shows that fine-grained federation is modular. 

\begin{restatable}{theorem}{unifwell}
\label{thrm:unifwell}
Uniform federation is modular and therefore satisfies the $2 \cd c+1$ egalitarian fairness bound from Theorem \ref{thrm:egalbound}. 
\end{restatable}

\begin{restatable}{theorem}{finewell}
\label{thrm:finewell}
Fine-grained federation is modular and therefore satisfies the $2 \cd c+1$ egalitarian fairness bound from Theorem \ref{thrm:egalbound}. 
\end{restatable}

The proofs of these theorems are relatively long and involved, so they are deferred to \ifarxiv Appendix \ref{app:proofs} \else the supplementary material (or the full version, available at \url{https://arxiv.org/abs/2112.00818}) \fi. However, we include a proof sketch for a portion of Theorem \ref{thrm:unifwell} to give intuition for the structure of these proofs. 

\begin{proof}[Proof sketch of Theorem \ref{thrm:unifwell}]
In this proof, we will denote the uniform federation error by $err^u$. Because Definition \ref{def:modular} has five components, this proof will have five sections, but in this proof sketch, we will only present Property 2. \\
\textbf{Property 2: }\\
For the second property, we wish to show that the worst case ratio of errors occurs in the two-player case. That is, 
 $$\frac{err_s^u\p{C}}{err_l^u\p{C}} \leq \frac{err_s^u\p{\{\ns, \nlv\}}}{err_l^u\p{\{\ns, \nlv}\}} $$
In order to prove this, we'll show something stronger.  Take any player $k \in \col$ (with $k \ne s, l$). Then, we will show that the derivative of the ratio with respect to the size of player $k$ ($\ndraw_k$) is always negative: 
$$\frac{\partial}{\partial\ndraw_k}\frac{err_s^u\p{C}}{err_l^u\p{C}} < 0$$
For uniform federation, we know that in the limit of a player contributing very few samples, its effect on the shared model approaches that of if it didn't participate. Put another way, 
$$\lim_{\ndraw_k \rightarrow 0} err_{s}^u(C) = err_s^u(C \setminus \{\ndraw_k\})$$
and similarly for the large player. This tells us that we can convert any coalition $\col$ into the two-player coalition $\{\ns, \nlv\}$ by sending the size of every other player to 0. If we can show that $\frac{\partial}{\partial\ndraw_k}\frac{err_s^u\p{C}}{err_l^u\p{C}} < 0$, that means that this process will only increase the ratio of the errors $\frac{err_s^u\p{C}}{err_l^u\p{C}}  $. This would show that the worst-case ratio of errors occurs in the two-player coalition. 

Next, we will start the proof. The derivative of the error ratio is given by:  $$\frac{\partial}{\partial\ndraw_k}\frac{err_s^u\p{C}}{err_l^u\p{C}}  = \frac{err_s^u(C)' \cd err_l^u(C) - err_s^u(C) \cd err_l^u(C)'}{(err_l^u(C))^2} $$
This is negative whenever: 
$$err_s^u\p{C}' \cd err_l^u\p{C}< err_s^u\p{C} \cd err_l^u\p{C}'$$
$$\frac{err_s^u\p{C}'}{err_s^u\p{C}} < \frac{err_l^u\p{C}'}{err_l^u\p{C}}$$
To show this result, we will calculate the lefthand side (ratio relating to the small player's error) and then show that it is less than the equivalent ratio for the large player. 

In order to analyze this ratio, we will need the functional form of expected error each player receives while using uniform federation, which we take from Lemma \ref{lem:error} (\cite{donahue2020model}). 
The error of the small player can be written: 
 $$\frac{\mue}{\total} + \var \frac{\sum_{i \in \col} \ndraw_i^2 + \total^2 - 2 \total \cd \ns}{\total^2}$$
 where for notational convenience we have used $\total = \sum_{i  \in \col} \ndraw_i$ to denote the total number of samples among all players in coalition $\col$. 
The derivative of the error with respect to $\ndraw_k$ is: 
 $$-\frac{\mue}{\total^2} + 2 \cd \var \frac{(\ndraw_k + \total - \ns) \cd \total - (\sum_{i \in \col} \ndraw_i^2 + \total^2 - 2 \total \cd \ns)}{\total^3}$$
 $$= \frac{-\mue \cd \total + 2\cd \var \cd \p{(\ndraw_k + \total - \ns) \cd \total - (\sum_{i \in \col} \ndraw_i^2 + \total^2 - 2 \total \cd \ns)}}{\total^3}$$
 The term we're interested in is the ratio of the derivative of the small player's error (which we just calculated) to the error of the small player. Note that the error itself can be written as: 
 $$\frac{\mue \cd \total + \var \p{\sum_{i \in \col} \ndraw_i^2 + \total^2 - 2 \total \cd \ns}}{\total^2}$$
 So the ratio of the derivative to the overall error is:  
 $$\frac{1}{\total} \cd \frac{-\mue \cd \total + 2\cd \var \cd \p{(\ndraw_k + \total - \ns) \cd \total - (\sum_{i \in \col} \ndraw_i^2 + \total^2 - 2 \total \cd \ns)}}{\mue \cd \total + \var \p{\sum_{i \in \col} \ndraw_i^2 + \total^2 - 2 \total \cd \ns}}$$
 What we would like to show is that the above term is less than the analogous term for the $\nlv$ variant, which can be symmetrically written as: 
 $$\frac{1}{\total} \cd \frac{-\mue \cd \total + 2\cd \var \cd \p{(\ndraw_k + \total - \nlv) \cd \total - (\sum_{i \in \col} \ndraw_i^2 + \total^2 - 2 \total \cd \nlv)}}{\mue \cd \total + \var \p{\sum_{i \in \col} \ndraw_i^2 + \total^2 - 2 \total \cd \nlv}}$$
 This is equivalent to proving: 
 $$\frac{B + 2 \var \cd \total \cd \ns}{A-2 \cd \var \cd \total \cd \ns} <\frac{B + 2 \var \cd \total \cd \nlv}{A-2 \cd \var \cd \total \cd \nlv} $$
 for 
 $$A = \mue \cd \total + \var \cd (\sum_{i \in \col} \ndraw_i^2 + \total^2)$$
 $$B = - \mue \cd \total + 2 \var \cd ((\ndraw_k + \total) \cd \total - (\sum_{i \in \col} \ndraw_i^2 + \total^2))$$
 We can cross multiply the inequality to get: 
 $$(B + 2 \var \cd \total \cd \ns) \cd(A-2 \cd \var \cd \total \cd \nlv) < (B + 2 \var \cd \total \cd \nlv) \cd (A-2 \cd \var \cd \total \cd \ns) $$
  $$0 < (B + 2 \var \cd \total \cd \nlv) \cd (A-2 \cd \var \cd \total \cd \ns)-(B + 2 \var \cd \total \cd \ns) \cd(A-2 \cd \var \cd \total \cd \nlv) $$
  Which simplifies to: 
  $$0 < 2 \cd (A+B)\cd (\nlv - \ns) \cd \total \cd \var$$
  Because we have assumed that $\nlv > \ns$ (the larger player has more samples than the smaller player), this is true if $A + B > 0$. We can evaluate $A+B$ as being: 
  \begin{align*}
   & A + B\\
     =& \mue \total + \var \p{\sum_{i \in \col} \ndraw_i^2 + \total^2}- \mue \total + 2 \var \p{(\ndraw_k + \total) \cd \total - \p{\sum_{i \in \col} \ndraw_i^2 + \total^2}}\\
    =& 2 \var \cd (\ndraw_k + \total) \cd \total - \var \cd \p{\sum_{i \in \col} \ndraw_i^2 - \total^2}\\
    =& 2 \var \cd \ndraw_k \cd \total + \var \cd \p{\total^2 - \sum_{i \in \col}\ndraw_i^2}\\
    >& 0
  \end{align*}
  as desired. This proves that 
$$\frac{\partial}{\partial\ndraw_k}\frac{err_s^u\p{C}}{err_l^u\p{C}} < 0$$
which by our prior reasoning indicates that for uniform federation, the worst-case error ratio occurs in the two-player coalition.  $$\frac{err_s^u\p{C}}{err_l^u\p{C}} \leq \frac{err_s^u\p{\{\ns, \nlv\}}}{err_l^u\p{\{\ns, \nlv}\}} $$
\end{proof}
For a full proof of Theorem \ref{thrm:unifwell} (including Properties 1, 3-5) and a proof of Theorem \ref{thrm:finewell} see \ifarxiv Appendix \ref{app:proofs} \else the supplementary material \fi. 

Theorems \ref{thrm:egalbound}, \ref{thrm:unifwell}, \ref{thrm:finewell}, taken together, show that while small players must get higher error than larger players in the same federation, the error that small players experience is upper bounded. Specifically, this upper bound is a multiple of the error that the larger player experiences, and depends on the size of the larger player, as a function of the noise-bias ratio. Because this bound is tight, we know that no lower bound is possible.

\section{Proportional fairness}\label{sec:prop}

\subsection{Motivation}

Next, we consider \emph{proportional fairness}, which revolves around the idea that players who contribute more data should be rewarded with lower error rates. This notion relies on the view that players do have control over how much data they obtain.  For example, imagine again two hospitals, this time with comparable levels of resources, but one of them has devoted a substantially larger share of its annual budget towards collecting high-quality local data. As a result, it has substantially more local data it can use for training. A model-sharing coalition that gives both hospitals roughly similar error rates might be seen as failing to recognize and reward that hospital's extraordinary efforts towards data collection.

Previously, Section \ref{sec:egal} we obtained results about egalitarian fairness that hold for a wide class of model-sharing. methods. In this section, on the other hand, we will show that the results for proportional fairness depend on the model-sharing method. Specifically, we will focus on whether the $\alpha$ value from Definition \ref{def:prop} is smaller than $1$, equal to 1, or larger than 1, which implies that the small player's error is lower, equal to, or great than what proportionality would suggest. Knowing this is useful because it indicates whether smaller players benefit from federation or are hurt, as compared with the goal of proportional error.  

\subsection{Fine-grained (personalized) always satisfies $\alpha \leq 1$}

Our first result is about fine-grained federation - when the aggregated model is personalized for each participating player. In this case, Theorem \ref{thrm:finestab} gives us an extremely clean result: proportional error \emph{always} satisfies $\alpha \leq 1$. That is, for any two players that are federating, $\ns \leq \nlv$, the small player always obtains error that is upper bounded by the error of the large player, multiplied by $\frac{\nlv}{\ns}$. 

\begin{restatable}{theorem}{finestab}
\label{thrm:finestab}
Optimal fine-grained federation always has $\alpha \leq 1$: the smaller player experiences \emph{lower} error than proportionality would suggest.
\end{restatable}
\begin{proof}[Proof sketch]
In order to show that $\alpha \leq 1$, we will show that: 
$$err_s^f(\col) \leq err_l^f(\col) \cd \frac{\ndraw_l}{\ndraw_s} \quad \ns \leq \nlv$$
or: 
$$err_s(\col) \cd \ndraw_s \leq err_l(\col) \cd \ndraw_l$$
where $err_s^f(\col)$ and $err_l^f(\col)$ are the error that the small and large players respectively obtain while participating in coalition $\col$ and using fine-grained federation. 

In order to analyze this, we use the functional form of error for fine-grained federation from Lemma \ref{lem:errorfine}: 
\begin{equation}\label{eq:errfine}
\mue\sum_{i=1}^{\nplayer}\vm_{ji}^2\cd \frac{1}{\ndraw_i} + \p{\sum_{i\ne j}\vm_{ji}^2 + \p{\sum_{i\ne j}\vm_{ji}}^2}\cd \var
\end{equation}
where the parameters $\vm_{ji}$ are set so as to minimize error: 
$$\vm_{jj} = \frac{1 + \var \sum_{i\ne j}\frac{1}{V_i}}{1 + V_j\sum_{i\ne j}\frac{1}{V_i}}\quad \vm_{jk} = \frac{1}{V_k}\cd \frac{V_j-\var}{1 + V_j \sum_{i\ne j}\frac{1}{V_i}} \quad k\ne j$$
for $V_i = \var + \frac{\mue}{\ndraw_i}$. 
We're interested in the term given by by $err_j^f(\col) \cd \ndraw_j$. Plugging in the $\vm_{jk}$ parameters into Eq. \eqref{eq:errfine} gives: 
\begin{align*}
 & \mue \cd \frac{(1 + \var \sum_{i\ne j}\frac{1}{V_i})^2}{(1 + V_j\sum_{i\ne j}\frac{1}{V_i})^2} \\
+ & \ndraw_j \cd \var \cd \p{
\frac{1 + V_j \sum_{i \ne j} \frac{1}{V_i}- 1 - \var \sum_{i \ne j} \frac{1}{V_i}}{1 + V_j \sum_{i \ne j} \frac{1}{V_i}}}^2\\
+& \ndraw_j \cd \sum_{i \ne j}\p{\frac{1}{V_i}\cd \frac{\frac{\mue}{\ndraw_j}}{1 + V_j \sum_{k \ne j}\frac{1}{V_k}}}^2\cd V_i
\end{align*}
Strategic simpliciation (details in full proof) shows that the above equation can be dramatically simplified to this form: 
$$\frac{\mue}{V_j \cd  \sum_{i=1}^{\nplayer} \frac{1}{V_i}} \cd  \p{1 + \var\cd  \sum_{i\ne j}^{\nplayer} \frac{1}{V_i}}$$
Next, in order to show that $\alpha \leq 1$, we need to show that: 
$$\frac{\mue}{V_s \cd  \sum_{i=1}^{\nplayer} \frac{1}{V_i}} \cd  \p{1 + \var\cd  \sum_{i\ne s}^{\nplayer} \frac{1}{V_i}} \leq \frac{\mue}{V_l \cd  \sum_{i=1}^{\nplayer} \frac{1}{V_i}} \cd  \p{1 + \var\cd  \sum_{i\ne l}^{\nplayer} \frac{1}{V_i}} $$
Further analysis shows that this holds exactly whenever: 
$$V_l \leq V_s$$
Note that because $V_i = \var + \frac{\mue}{\ndraw_i}$: 
$$\ns \leq \nlv \quad \Rightarrow \quad V_s \geq V_l$$
This means that the overall proof is concluded: using optimal fine-grained federation, sub-proportional error is always satisfied. 
\end{proof}

\subsection{Vanilla federation satisfies $\alpha \leq 1$ if individually rational}

While Theorem \ref{thrm:finestab} gives a very clean result for fine-grained federation, the situation for vanilla federation is much less clear. Lemma \ref{lem:grandcanviolateprop} shows empirically that there exist situations where a federating coalition using vanilla federation could have $\alpha$ values on either side of $\alpha =1$. 

\begin{lemma}\label{lem:grandcanviolateprop}
There exist cases where a coalition using vanilla federation satisfies $\alpha < 1$, $\alpha = 1$, or $\alpha >1$: that is, the smaller player could experience error that is lower than, equal to, or greater than proportionality would suggest.
\end{lemma}


\begin{table}[]
\centering
\begin{tabular}{|c|c|c|c|c|c|c|}
\hline
            & $err_s(\col)$ & $err_l(\col)$ & $\frac{err_s(\col)}{err_l(\col)}$ & $2 c + 1$   & $\frac{\nlv}{\ns}$ & $ \alpha$ \\  \hline
$\nlv = 20$ & 1.57                   & 0.491                  & 3.19                                                           & 5 & 3.33    &    0.96    \\ \hline
$\nlv = 30$ & 1.67                   & 0.333                   & 5                                                         & 7           & 5   & 1            \\ \hline
$\nlv = 40$ & 1.73                   & 0.251                   & 6.89                                                     & 9            & 6.67    &   1.03   \\ \hline
\end{tabular}
\caption{
The actual error rates experienced by a pair of federating agents, where $\ns = 6$ samples and $\nlv$ varies. The errors are derived using vanilla federation as given in Lemma \ref{lem:error} with parameters $\mue = 10, \var = 1$. The first two columns display the error rates experienced by both players, while the third displays the ratio of their errors. The fourth column gives a egalitarian fairness bound derived in Theorem \ref{thrm:egalbound} and the fifth column displays the ratio of the number of samples provided by each player. Finally, the sixth column displays the $\alpha$ parameter as defined in Definition \ref{def:prop} measuring how far from proportional the error rates are. 
}
\label{tab:realmotivate2}
\end{table}

\begin{proof}
This proof is constructive: we give an explicit example of such situations. 
Throughout, we use the error form given in Lemma \ref{lem:error} with $\mue = 10, \var =1$. We will consider the case where there are exactly two players: $\ns = 6$, and $\nlv$ with a varying number of samples. 

 The first row gives the error that the small and large players experience with $\nlv = 20$. Note that the implied $\alpha$ value (sixth column) is less than 1, which is because the ratio of their errors (given in the third column) is strictly less than the (inverse) ratio of their sizes (given in the fifth column). This means that the small player gets lower error than proportional scaling with respect to their sizes would suggest. The second row gives the error that the players experience with $\nlv = 30$. Here, the ratio of their errors exactly matches the ratio of their sizes: this is a case of exact proportionality. Finally, the third row gives the error the players experience when $\nlv = 40$. Here the ratio of their errors is strictly greater than the ratio of their sizes: this means $\alpha >1$. 
 
 In later analyses, we will show that all arrangements with $\alpha >1$ fail to be individually rational: at least one federating agent would prefer local learning. For the example with $\nlv =40$, the error form from Lemma \ref{lem:error} gives that the large player would obtain error of $0.25$ with local learning. This is strictly less than its federating error of $0.251$, so it is incentivized to leave the federating coalition. 

\end{proof}

Note that Table \ref{tab:realmotivate2} also illustrates the bound derived in Theorem \ref{thrm:egalbound}. Given that the error/bias ratio here is $\frac{\mue}{\var} = \frac{10}{1} = 10$, the $2 c + 1$ bound scales with the size of the largest player, in multiples of 10. The bound in the fifth column is always strictly greater than the ratio in the third column, illustrating the result from Theorem \ref{thrm:egalbound}. 

Next, Theorem \ref{thrm:propstab} gives conditions for when $\alpha \leq 1$ is guaranteed for vanilla federation: specifically, whenever the coalition is individually rational. A coalition $\col$ is \emph{individually rational} if each player in the coalition prefers being in $\col$ to doing local learning. Individual rationally is already a key goal of federated learning, indicating that many model-sharing scenarios might already achieve it \cite{cho2022federate}. 

\begin{restatable}{theorem}{egalstab}
\label{thrm:propstab}
Any individually rational coalition using vanilla federation always has $\alpha \leq 1$: the smaller player experiences lower error than proportionality would suggest.
\end{restatable}

As mentioned previously, $\alpha \leq 1$ means that smaller players (contributing fewer samples to the federating model) experience error that is \emph{lower} than proportionality would suggest. This may or may not be desirable, depending on the motivation for model sharing (as described in Section \ref{sec:fairdef}).  The full proof of Theorem \ref{thrm:propstab} is given in \ifarxiv Appendix \ref{app:proofs} \else the supplementary material \fi, but we include a proof sketch below. 
\begin{proof}[Proof Sketch]
We will show this result by proving the contrapositive: if $\alpha\leq1$ is violated, then at least one player wishes to leave that coalition for local learning (so individual rationality is violated). 

In our analysis, we will consider two players $\ns < \nlv$, though the players could be arbitrarily situated with respect to $ \frac{\mue}{\var}$. We will assume that all of the players are federating together in coalition $A$. For notational convenience, we will write the federating coalition $A$ of interest as $A = \col \cup \{\nlv\}$, so $\col$ refers to all players except $\nlv$. 

The case where $\alpha > 1$ as: 
$$err_{\si}(A) \cd \ns \geq err_{\li}(A) \cd \nlv$$

The full proof, in \ifarxiv Appendix \ref{app:proofs} \else the supplementary material \fi, shows that this inequality can be rewritten as: 
\begin{equation}\label{eq:subpropviol_main} \frac{-2 \var \cd \total_C + \frac{\mue}{\ns} \cd \total_C + \frac{\var}{\ns} \cd \p{\sum_{i \in C} \ndraw_i^2 + \total_C^2}}{2 \var-\frac{\mue}{\ns} }  \leq \nlv
\end{equation}
Equation \eqref{eq:subpropviol_main} gives a lower bound on how many samples the large player has to have before $\alpha \leq 1$ is violated. Next, the proof obtains a similar lower bound for how many samples the large player would need before it would prefer to defect: we will show this lower bound is \emph{lower} than Equation \eqref{eq:subpropviol_main} above. 

The case when the large player wishes to defect from $A$ to local learning (or is ambivalent about defecting) is when: 
$$err_{\li}(\alone) \leq err_{\li}(A)$$
The full version of the proof shows that this can be simplified to: 
\begin{equation}\label{eq:unstable_main} \frac{-2 \var \cd \total_C + \frac{\mue}{\ns} \cd \total_C + \frac{\var}{\ns} \cd \p{\sum_{i \in C} \ndraw_i^2 + \total_C^2}}{2 \var-\frac{\mue}{\ns} }  \leq \nlv
\end{equation}

The full proof concludes by showing that the bound in Equation \eqref{eq:unstable_main} is lower than the bound in Equation \eqref{eq:subpropviol_main}: any situation where the $\nlv$ term is large enough so that $\alpha>1$, we know that the large player already wishes to leave.
\end{proof}

Overall, the bound in Theorem \ref{thrm:propstab} is different from the results on egalitarian fairness in Section \ref{sec:egal}. For egalitarian fairness, we obtained a single bound that held for a wide range of model aggregation methods. In this section, by contrast, we show that the bounds we obtain are specific to the model aggregation method. In Theorem \ref{thrm:finestab}, we show that for fine-grained federation, $\alpha \leq 1$ is always satisfied. However, for vanilla federation, we obtain a guarantee that holds whenever a coalition is individually rational: whenever all collaborating players obtain lower error than they would in local learning. The result in Theorem \ref{thrm:propstab} could be useful if participating is truly voluntary: then, any coalition where smaller players experience error greater than proportional could never exist, because it would unravel into local learning. However, this result could still be useful even if joining a coalition isn't something individual players could opt out of. Theorem \ref{thrm:propstab} shows that any coalition with $\alpha >1$ is one that also has the problem that some players receive higher error than they would in local learning - which indicates a broader failure of model-sharing, which may prompt any organizer to change their model aggregation methods or coalition structure. 

\section{Discussion}

In this work, we have used a theoretical model to analyze the fairness properties of model-sharing games. We formalized two notions of fairness already under analysis in data cooperative and federated learning research and gave conditions for when those fairness properties are or are not satisfied. For egalitarian fairness, we gave a tight multiplicative bound on the gap in error rates between any two federating players. For proportional fairness, we showed that sub-proportional fairness ($\alpha \leq 1$) is satisfied always for fine-grained federation, and for any individually rational coalition for vanilla federation. 

Future research could consider questions in two directions. First: which other notions of fairness might be relevant? It might be the case that an additive version of egalitarian fairness would be more appropriate for certain situations, for example, or a version of proportional fairness based on some metric related to the quality of data points, rather than how their quantity. Secondly, which fairness properties might other methods of model aggregation display? While we have analyzed two federation methods (vanilla and fine-grained), it could be fruitful to analyze other methods to determine how well they achieve certain fairness results. 

%
\begin{acks}
This work was supported in part by a Simons Investigator Award, a Vannevar Bush Faculty Fellowship, MURI grant W911NF-19-0217, AFOSR grant FA9550-19-1-0183, ARO grant W911NF19-1-0057, a Simons Collaboration grant, a grant from the MacArthur Foundation, and NSF grant DGE-1650441. We are grateful to the AI, Policy, and Practice working group at Cornell for invaluable discussions. 
\end{acks}

\bibliographystyle{ACM-Reference-Format}
\bibliography{sample-base}


\begin{thebibliography}{46}


\ifx \showCODEN    \undefined \def \showCODEN     #1{\unskip}     \fi
\ifx \showDOI      \undefined \def \showDOI       #1{#1}\fi
\ifx \showISBNx    \undefined \def \showISBNx     #1{\unskip}     \fi
\ifx \showISBNxiii \undefined \def \showISBNxiii  #1{\unskip}     \fi
\ifx \showISSN     \undefined \def \showISSN      #1{\unskip}     \fi
\ifx \showLCCN     \undefined \def \showLCCN      #1{\unskip}     \fi
\ifx \shownote     \undefined \def \shownote      #1{#1}          \fi
\ifx \showarticletitle \undefined \def \showarticletitle #1{#1}   \fi
\ifx \showURL      \undefined \def \showURL       {\relax}        \fi
\providecommand\bibfield[2]{#2}
\providecommand\bibinfo[2]{#2}
\providecommand\natexlab[1]{#1}
\providecommand\showeprint[2][]{arXiv:#2}

\bibitem[Abay et~al\mbox{.}(2020)]%
        {abay2020mitigating}
\bibfield{author}{\bibinfo{person}{Annie Abay}, \bibinfo{person}{Yi Zhou},
  \bibinfo{person}{Nathalie Baracaldo}, \bibinfo{person}{Shashank Rajamoni},
  \bibinfo{person}{Ebube Chuba}, {and} \bibinfo{person}{Heiko Ludwig}.}
  \bibinfo{year}{2020}\natexlab{}.
\newblock \bibinfo{title}{Mitigating Bias in Federated Learning}.
\newblock
\newblock
\showeprint[arxiv]{2012.02447}~[cs.LG]


\bibitem[Bie et~al\mbox{.}(2022)]%
        {bie2022private}
\bibfield{author}{\bibinfo{person}{Alex Bie}, \bibinfo{person}{Gautam Kamath},
  {and} \bibinfo{person}{Vikrant Singhal}.} \bibinfo{year}{2022}\natexlab{}.
\newblock \showarticletitle{Private Estimation with Public Data}.
\newblock \bibinfo{journal}{\emph{arXiv preprint arXiv:2208.07984}}
  (\bibinfo{year}{2022}).
\newblock


\bibitem[Blum et~al\mbox{.}(2021)]%
        {blum2021one}
\bibfield{author}{\bibinfo{person}{Avrim Blum}, \bibinfo{person}{Nika
  Haghtalab}, \bibinfo{person}{Richard~Lanas Phillips}, {and}
  \bibinfo{person}{Han Shao}.} \bibinfo{year}{2021}\natexlab{}.
\newblock \showarticletitle{One for one, or all for all: Equilibria and
  optimality of collaboration in federated learning}. In
  \bibinfo{booktitle}{\emph{International Conference on Machine Learning}}.
  PMLR, \bibinfo{pages}{1005--1014}.
\newblock


\bibitem[Boenisch et~al\mbox{.}(2021)]%
        {boenisch2021curious}
\bibfield{author}{\bibinfo{person}{Franziska Boenisch}, \bibinfo{person}{Adam
  Dziedzic}, \bibinfo{person}{Roei Schuster}, \bibinfo{person}{Ali~Shahin
  Shamsabadi}, \bibinfo{person}{Ilia Shumailov}, {and} \bibinfo{person}{Nicolas
  Papernot}.} \bibinfo{year}{2021}\natexlab{}.
\newblock \showarticletitle{When the Curious Abandon Honesty: Federated
  Learning Is Not Private}.
\newblock \bibinfo{journal}{\emph{arXiv preprint arXiv:2112.02918}}
  (\bibinfo{year}{2021}).
\newblock


\bibitem[Chaudhury et~al\mbox{.}(2022)]%
        {chaudhury2022fairness}
\bibfield{author}{\bibinfo{person}{Bhaskar~Ray Chaudhury},
  \bibinfo{person}{Linyi Li}, \bibinfo{person}{Mintong Kang},
  \bibinfo{person}{Bo Li}, {and} \bibinfo{person}{Ruta Mehta}.}
  \bibinfo{year}{2022}\natexlab{}.
\newblock \showarticletitle{Fairness in Federated Learning via Core-Stability}.
\newblock \bibinfo{journal}{\emph{arXiv preprint arXiv:2211.02091}}
  (\bibinfo{year}{2022}).
\newblock


\bibitem[Cho et~al\mbox{.}(2022)]%
        {cho2022federate}
\bibfield{author}{\bibinfo{person}{Yae~Jee Cho}, \bibinfo{person}{Divyansh
  Jhunjhunwala}, \bibinfo{person}{Tian Li}, \bibinfo{person}{Virginia Smith},
  {and} \bibinfo{person}{Gauri Joshi}.} \bibinfo{year}{2022}\natexlab{}.
\newblock \showarticletitle{To Federate or Not To Federate: Incentivizing
  Client Participation in Federated Learning}.
\newblock \bibinfo{journal}{\emph{arXiv preprint arXiv:2205.14840}}
  (\bibinfo{year}{2022}).
\newblock


\bibitem[Cooper et~al\mbox{.}(2023)]%
        {coopervariance}
\bibfield{author}{\bibinfo{person}{A.~Feder Cooper}, \bibinfo{person}{Solon
  Barocas}, \bibinfo{person}{Christopher De~Sa}, {and}
  \bibinfo{person}{Siddhartha Sen}.} \bibinfo{year}{2023}\natexlab{}.
\newblock \bibinfo{title}{Variance, Self-Consistency, and Arbitrariness in Fair
  Classification}.  (\bibinfo{year}{2023}).
\newblock


\bibitem[Cummings et~al\mbox{.}(2022)]%
        {cummings2022optimal}
\bibfield{author}{\bibinfo{person}{Rachel Cummings}, \bibinfo{person}{Hadi
  Elzayn}, \bibinfo{person}{Vasilis Gkatzelis}, \bibinfo{person}{Emmanouil
  Pountourakis}, {and} \bibinfo{person}{Juba Ziani}.}
  \bibinfo{year}{2022}\natexlab{}.
\newblock \showarticletitle{Optimal data acquisition with privacy-aware
  agents}.
\newblock \bibinfo{journal}{\emph{arXiv preprint arXiv:2209.06340}}
  (\bibinfo{year}{2022}).
\newblock


\bibitem[Donahue and Kleinberg(2021a)]%
        {donahue2020model}
\bibfield{author}{\bibinfo{person}{Kate Donahue} {and} \bibinfo{person}{Jon
  Kleinberg}.} \bibinfo{year}{2021}\natexlab{a}.
\newblock \showarticletitle{Model-sharing Games: Analyzing Federated Learning
  Under Voluntary Participation}.
\newblock \bibinfo{journal}{\emph{AAAI 2021}} (\bibinfo{year}{2021}).
\newblock
\urldef\tempurl%
\url{https://arxiv.org/abs/2010.00753}
\showURL{%
\tempurl}


\bibitem[Donahue and Kleinberg(2021b)]%
        {donahue2021optimality}
\bibfield{author}{\bibinfo{person}{Kate Donahue} {and} \bibinfo{person}{Jon
  Kleinberg}.} \bibinfo{year}{2021}\natexlab{b}.
\newblock \showarticletitle{Optimality and Stability in Federated Learning: A
  Game-theoretic Approach}. In \bibinfo{booktitle}{\emph{Advances in Neural
  Information Processing Systems}}.
\newblock
\urldef\tempurl%
\url{https://proceedings.neurips.cc/paper/2021/hash/09a5e2a11bea20817477e0b1dfe2cc21-Abstract.html}
\showURL{%
\tempurl}


\bibitem[Du et~al\mbox{.}(2020)]%
        {Du2020FairnessawareAF}
\bibfield{author}{\bibinfo{person}{Wei Du}, \bibinfo{person}{Depeng Xu},
  \bibinfo{person}{Xintao Wu}, {and} \bibinfo{person}{Hanghang Tong}.}
  \bibinfo{year}{2020}\natexlab{}.
\newblock \showarticletitle{Fairness-aware Agnostic Federated Learning}.
\newblock \bibinfo{journal}{\emph{ArXiv}}  \bibinfo{volume}{abs/2010.05057}
  (\bibinfo{year}{2020}).
\newblock


\bibitem[Fan et~al\mbox{.}(2022)]%
        {fan2022improving}
\bibfield{author}{\bibinfo{person}{Zhenan Fan}, \bibinfo{person}{Huang Fang},
  \bibinfo{person}{Zirui Zhou}, \bibinfo{person}{Jian Pei},
  \bibinfo{person}{Michael~P Friedlander}, \bibinfo{person}{Changxin Liu},
  {and} \bibinfo{person}{Yong Zhang}.} \bibinfo{year}{2022}\natexlab{}.
\newblock \showarticletitle{Improving Fairness for Data Valuation in Horizontal
  Federated Learning}. In \bibinfo{booktitle}{\emph{2022 IEEE 38th
  International Conference on Data Engineering (ICDE)}}. IEEE,
  \bibinfo{pages}{2440--2453}.
\newblock


\bibitem[Grimberg et~al\mbox{.}(2021)]%
        {grimberg2021optimal}
\bibfield{author}{\bibinfo{person}{Felix Grimberg}, \bibinfo{person}{Mary-Anne
  Hartley}, \bibinfo{person}{Sai~P Karimireddy}, {and} \bibinfo{person}{Martin
  Jaggi}.} \bibinfo{year}{2021}\natexlab{}.
\newblock \showarticletitle{Optimal model averaging: Towards personalized
  collaborative learning}.
\newblock \bibinfo{journal}{\emph{arXiv preprint arXiv:2110.12946}}
  (\bibinfo{year}{2021}).
\newblock


\bibitem[Hafen et~al\mbox{.}(2014)]%
        {hafen2014health}
\bibfield{author}{\bibinfo{person}{Ernst Hafen}, \bibinfo{person}{D Kossmann},
  {and} \bibinfo{person}{A Brand}.} \bibinfo{year}{2014}\natexlab{}.
\newblock \showarticletitle{Health data cooperatives--citizen empowerment}.
\newblock \bibinfo{journal}{\emph{Methods of information in medicine}}
  \bibinfo{volume}{53}, \bibinfo{number}{02} (\bibinfo{year}{2014}),
  \bibinfo{pages}{82--86}.
\newblock


\bibitem[Hardjono and Pentland(2019)]%
        {hardjono2019data}
\bibfield{author}{\bibinfo{person}{Thomas Hardjono} {and} \bibinfo{person}{Alex
  Pentland}.} \bibinfo{year}{2019}\natexlab{}.
\newblock \showarticletitle{Data cooperatives: Towards a foundation for
  decentralized personal data management}.
\newblock \bibinfo{journal}{\emph{arXiv preprint arXiv:1905.08819}}
  (\bibinfo{year}{2019}).
\newblock


\bibitem[Hasan(2021)]%
        {hasan2021incentive}
\bibfield{author}{\bibinfo{person}{Cengis Hasan}.}
  \bibinfo{year}{2021}\natexlab{}.
\newblock \bibinfo{title}{Incentive Mechanism Design for Federated Learning:
  Hedonic Game Approach}.
\newblock
\newblock
\showeprint[arxiv]{2101.09673}~[cs.GT]


\bibitem[Hestness et~al\mbox{.}(2017)]%
        {hestness2017deep}
\bibfield{author}{\bibinfo{person}{Joel Hestness}, \bibinfo{person}{Sharan
  Narang}, \bibinfo{person}{Newsha Ardalani}, \bibinfo{person}{Gregory Diamos},
  \bibinfo{person}{Heewoo Jun}, \bibinfo{person}{Hassan Kianinejad},
  \bibinfo{person}{Md Patwary}, \bibinfo{person}{Mostofa Ali},
  \bibinfo{person}{Yang Yang}, {and} \bibinfo{person}{Yanqi Zhou}.}
  \bibinfo{year}{2017}\natexlab{}.
\newblock \showarticletitle{Deep learning scaling is predictable, empirically}.
\newblock \bibinfo{journal}{\emph{arXiv preprint arXiv:1712.00409}}
  (\bibinfo{year}{2017}).
\newblock


\bibitem[Kairouz et~al\mbox{.}(2019)]%
        {kairouz2019advances}
\bibfield{author}{\bibinfo{person}{Peter Kairouz}, \bibinfo{person}{H.~Brendan
  McMahan}, \bibinfo{person}{Brendan Avent}, \bibinfo{person}{Aurélien
  Bellet}, \bibinfo{person}{Mehdi Bennis}, \bibinfo{person}{Arjun~Nitin
  Bhagoji}, \bibinfo{person}{Keith Bonawitz}, \bibinfo{person}{Zachary
  Charles}, \bibinfo{person}{Graham Cormode}, \bibinfo{person}{Rachel
  Cummings}, \bibinfo{person}{Rafael G.~L. D'Oliveira},
  \bibinfo{person}{Salim~El Rouayheb}, \bibinfo{person}{David Evans},
  \bibinfo{person}{Josh Gardner}, \bibinfo{person}{Zachary Garrett},
  \bibinfo{person}{Adrià Gascón}, \bibinfo{person}{Badih Ghazi},
  \bibinfo{person}{Phillip~B. Gibbons}, \bibinfo{person}{Marco Gruteser},
  \bibinfo{person}{Zaid Harchaoui}, \bibinfo{person}{Chaoyang He},
  \bibinfo{person}{Lie He}, \bibinfo{person}{Zhouyuan Huo},
  \bibinfo{person}{Ben Hutchinson}, \bibinfo{person}{Justin Hsu},
  \bibinfo{person}{Martin Jaggi}, \bibinfo{person}{Tara Javidi},
  \bibinfo{person}{Gauri Joshi}, \bibinfo{person}{Mikhail Khodak},
  \bibinfo{person}{Jakub Konečný}, \bibinfo{person}{Aleksandra Korolova},
  \bibinfo{person}{Farinaz Koushanfar}, \bibinfo{person}{Sanmi Koyejo},
  \bibinfo{person}{Tancrède Lepoint}, \bibinfo{person}{Yang Liu},
  \bibinfo{person}{Prateek Mittal}, \bibinfo{person}{Mehryar Mohri},
  \bibinfo{person}{Richard Nock}, \bibinfo{person}{Ayfer Özgür},
  \bibinfo{person}{Rasmus Pagh}, \bibinfo{person}{Mariana Raykova},
  \bibinfo{person}{Hang Qi}, \bibinfo{person}{Daniel Ramage},
  \bibinfo{person}{Ramesh Raskar}, \bibinfo{person}{Dawn Song},
  \bibinfo{person}{Weikang Song}, \bibinfo{person}{Sebastian~U. Stich},
  \bibinfo{person}{Ziteng Sun}, \bibinfo{person}{Ananda~Theertha Suresh},
  \bibinfo{person}{Florian Tramèr}, \bibinfo{person}{Praneeth Vepakomma},
  \bibinfo{person}{Jianyu Wang}, \bibinfo{person}{Li Xiong},
  \bibinfo{person}{Zheng Xu}, \bibinfo{person}{Qiang Yang},
  \bibinfo{person}{Felix~X. Yu}, \bibinfo{person}{Han Yu}, {and}
  \bibinfo{person}{Sen Zhao}.} \bibinfo{year}{2019}\natexlab{}.
\newblock \bibinfo{title}{Advances and Open Problems in Federated Learning}.
\newblock
\newblock
\showeprint[arxiv]{1912.04977}~[cs.LG]


\bibitem[Kanaparthy et~al\mbox{.}(2021)]%
        {kanaparthy2021fair}
\bibfield{author}{\bibinfo{person}{Samhita Kanaparthy},
  \bibinfo{person}{Manisha Padala}, \bibinfo{person}{Sankarshan Damle}, {and}
  \bibinfo{person}{Sujit Gujar}.} \bibinfo{year}{2021}\natexlab{}.
\newblock \showarticletitle{Fair Federated Learning for Heterogeneous Face
  Data}.
\newblock \bibinfo{journal}{\emph{arXiv preprint arXiv:2109.02351}}
  (\bibinfo{year}{2021}).
\newblock


\bibitem[Kang et~al\mbox{.}(2023)]%
        {kang2023fair}
\bibfield{author}{\bibinfo{person}{Justin Kang}, \bibinfo{person}{Ramtin
  Pedarsani}, {and} \bibinfo{person}{Kannan Ramchandran}.}
  \bibinfo{year}{2023}\natexlab{}.
\newblock \showarticletitle{The Fair Value of Data Under Heterogeneous Privacy
  Constraints}.
\newblock \bibinfo{journal}{\emph{arXiv preprint arXiv:2301.13336}}
  (\bibinfo{year}{2023}).
\newblock


\bibitem[Karimireddy et~al\mbox{.}(2022)]%
        {karimireddy2022mechanisms}
\bibfield{author}{\bibinfo{person}{Sai~Praneeth Karimireddy},
  \bibinfo{person}{Wenshuo Guo}, {and} \bibinfo{person}{Michael~I Jordan}.}
  \bibinfo{year}{2022}\natexlab{}.
\newblock \showarticletitle{Mechanisms that Incentivize Data Sharing in
  Federated Learning}.
\newblock \bibinfo{journal}{\emph{arXiv preprint arXiv:2207.04557}}
  (\bibinfo{year}{2022}).
\newblock


\bibitem[Li et~al\mbox{.}(2021)]%
        {li2021ditto}
\bibfield{author}{\bibinfo{person}{Tian Li}, \bibinfo{person}{Shengyuan Hu},
  \bibinfo{person}{Ahmad Beirami}, {and} \bibinfo{person}{Virginia Smith}.}
  \bibinfo{year}{2021}\natexlab{}.
\newblock \showarticletitle{Ditto: Fair and robust federated learning through
  personalization}. In \bibinfo{booktitle}{\emph{International Conference on
  Machine Learning}}. PMLR, \bibinfo{pages}{6357--6368}.
\newblock


\bibitem[Li et~al\mbox{.}(2020)]%
        {Li_2020}
\bibfield{author}{\bibinfo{person}{Tian Li}, \bibinfo{person}{Anit~Kumar Sahu},
  \bibinfo{person}{Ameet Talwalkar}, {and} \bibinfo{person}{Virginia Smith}.}
  \bibinfo{year}{2020}\natexlab{}.
\newblock \showarticletitle{Federated Learning: Challenges, Methods, and Future
  Directions}.
\newblock \bibinfo{journal}{\emph{IEEE Signal Processing Magazine}}
  \bibinfo{volume}{37}, \bibinfo{number}{3} (\bibinfo{date}{May}
  \bibinfo{year}{2020}), \bibinfo{pages}{50–60}.
\newblock
\showISSN{1558-0792}
\urldef\tempurl%
\url{https://doi.org/10.1109/msp.2020.2975749}
\showDOI{\tempurl}


\bibitem[Li et~al\mbox{.}(2019)]%
        {li2019fair}
\bibfield{author}{\bibinfo{person}{Tian Li}, \bibinfo{person}{Maziar Sanjabi},
  \bibinfo{person}{Ahmad Beirami}, {and} \bibinfo{person}{Virginia Smith}.}
  \bibinfo{year}{2019}\natexlab{}.
\newblock \bibinfo{title}{Fair Resource Allocation in Federated Learning}.
\newblock
\newblock
\showeprint[arxiv]{1905.10497}~[cs.LG]


\bibitem[Li et~al\mbox{.}(2023)]%
        {li2023differentially}
\bibfield{author}{\bibinfo{person}{Yiwei Li}, \bibinfo{person}{Shuai Wang},
  \bibinfo{person}{Chong-Yung Chi}, {and} \bibinfo{person}{Tony~QS Quek}.}
  \bibinfo{year}{2023}\natexlab{}.
\newblock \showarticletitle{Differentially Private Federated Clustering over
  Non-IID Data}.
\newblock \bibinfo{journal}{\emph{arXiv preprint arXiv:2301.00955}}
  (\bibinfo{year}{2023}).
\newblock


\bibitem[Lomotey et~al\mbox{.}(2022)]%
        {LOMOTEY2022100075}
\bibfield{author}{\bibinfo{person}{Richard~K. Lomotey}, \bibinfo{person}{Sandra
  Kumi}, {and} \bibinfo{person}{Ralph Deters}.}
  \bibinfo{year}{2022}\natexlab{}.
\newblock \showarticletitle{Data Trusts as a Service: Providing a platform for
  multi‐party data sharing}.
\newblock \bibinfo{journal}{\emph{International Journal of Information
  Management Data Insights}} \bibinfo{volume}{2}, \bibinfo{number}{1}
  (\bibinfo{year}{2022}), \bibinfo{pages}{100075}.
\newblock
\showISSN{2667-0968}
\urldef\tempurl%
\url{https://doi.org/10.1016/j.jjimei.2022.100075}
\showDOI{\tempurl}


\bibitem[Lozano(2012)]%
        {LOZANO2012558}
\bibfield{author}{\bibinfo{person}{S. Lozano}.}
  \bibinfo{year}{2012}\natexlab{}.
\newblock \showarticletitle{Information sharing in DEA: A cooperative game
  theory approach}.
\newblock \bibinfo{journal}{\emph{European Journal of Operational Research}}
  \bibinfo{volume}{222}, \bibinfo{number}{3} (\bibinfo{year}{2012}),
  \bibinfo{pages}{558--565}.
\newblock
\showISSN{0377-2217}
\urldef\tempurl%
\url{https://doi.org/10.1016/j.ejor.2012.05.014}
\showDOI{\tempurl}


\bibitem[Lyu et~al\mbox{.}(2020)]%
        {Lyu2020CollaborativeFI}
\bibfield{author}{\bibinfo{person}{L. Lyu}, \bibinfo{person}{Xinyi Xu}, {and}
  \bibinfo{person}{Q. Wang}.} \bibinfo{year}{2020}\natexlab{}.
\newblock \showarticletitle{Collaborative Fairness in Federated Learning}.
\newblock \bibinfo{journal}{\emph{ArXiv}}  \bibinfo{volume}{abs/2008.12161}
  (\bibinfo{year}{2020}).
\newblock


\bibitem[McMahan et~al\mbox{.}(2016)]%
        {mcmahan2016communicationefficient}
\bibfield{author}{\bibinfo{person}{H.~Brendan McMahan}, \bibinfo{person}{Eider
  Moore}, \bibinfo{person}{Daniel Ramage}, \bibinfo{person}{Seth Hampson},
  {and} \bibinfo{person}{Blaise~Agüera y Arcas}.}
  \bibinfo{year}{2016}\natexlab{}.
\newblock \bibinfo{title}{Communication-Efficient Learning of Deep Networks
  from Decentralized Data}.
\newblock
\newblock
\showeprint[arxiv]{1602.05629}~[cs.LG]


\bibitem[Mohri et~al\mbox{.}(2019)]%
        {mohri2019agnostic}
\bibfield{author}{\bibinfo{person}{Mehryar Mohri}, \bibinfo{person}{Gary
  Sivek}, {and} \bibinfo{person}{Ananda~Theertha Suresh}.}
  \bibinfo{year}{2019}\natexlab{}.
\newblock \bibinfo{title}{Agnostic Federated Learning}.
\newblock
\newblock
\showeprint[arxiv]{1902.00146}~[cs.LG]


\bibitem[Mothukuri et~al\mbox{.}(2021)]%
        {mothukuri2021survey}
\bibfield{author}{\bibinfo{person}{Viraaji Mothukuri}, \bibinfo{person}{Reza~M
  Parizi}, \bibinfo{person}{Seyedamin Pouriyeh}, \bibinfo{person}{Yan Huang},
  \bibinfo{person}{Ali Dehghantanha}, {and} \bibinfo{person}{Gautam
  Srivastava}.} \bibinfo{year}{2021}\natexlab{}.
\newblock \showarticletitle{A survey on security and privacy of federated
  learning}.
\newblock \bibinfo{journal}{\emph{Future Generation Computer Systems}}
  \bibinfo{volume}{115} (\bibinfo{year}{2021}), \bibinfo{pages}{619--640}.
\newblock


\bibitem[Papadaki et~al\mbox{.}(2021)]%
        {papadaki2021federating}
\bibfield{author}{\bibinfo{person}{Afroditi Papadaki}, \bibinfo{person}{Natalia
  Martinez}, \bibinfo{person}{Martin Bertran}, \bibinfo{person}{Guillermo
  Sapiro}, {and} \bibinfo{person}{Miguel Rodrigues}.}
  \bibinfo{year}{2021}\natexlab{}.
\newblock \showarticletitle{Federating for Learning Group Fair Models}.
\newblock \bibinfo{journal}{\emph{arXiv preprint arXiv:2110.01999}}
  (\bibinfo{year}{2021}).
\newblock


\bibitem[Pei et~al\mbox{.}(2019)]%
        {8761948}
\bibfield{author}{\bibinfo{person}{Yingying Pei}, \bibinfo{person}{Fen Hou},
  {and} \bibinfo{person}{Lin~X. Cai}.} \bibinfo{year}{2019}\natexlab{}.
\newblock \showarticletitle{A Stable and Fair Coalition Formation Scheme in
  Mobile Crowd Sensing}. In \bibinfo{booktitle}{\emph{ICC 2019 - 2019 IEEE
  International Conference on Communications (ICC)}}. \bibinfo{pages}{1--6}.
\newblock
\urldef\tempurl%
\url{https://doi.org/10.1109/ICC.2019.8761948}
\showDOI{\tempurl}


\bibitem[Rolf et~al\mbox{.}(2021)]%
        {rolf2021representation}
\bibfield{author}{\bibinfo{person}{Esther Rolf}, \bibinfo{person}{Theodora~T
  Worledge}, \bibinfo{person}{Benjamin Recht}, {and} \bibinfo{person}{Michael
  Jordan}.} \bibinfo{year}{2021}\natexlab{}.
\newblock \showarticletitle{Representation matters: Assessing the importance of
  subgroup allocations in training data}. In
  \bibinfo{booktitle}{\emph{International Conference on Machine Learning}}.
  PMLR, \bibinfo{pages}{9040--9051}.
\newblock


\bibitem[Shi et~al\mbox{.}(2021)]%
        {shi2021survey}
\bibfield{author}{\bibinfo{person}{Yuxin Shi}, \bibinfo{person}{Han Yu}, {and}
  \bibinfo{person}{Cyril Leung}.} \bibinfo{year}{2021}\natexlab{}.
\newblock \showarticletitle{A Survey of Fairness-Aware Federated Learning}.
\newblock \bibinfo{journal}{\emph{arXiv preprint arXiv:2111.01872}}
  (\bibinfo{year}{2021}).
\newblock


\bibitem[Song et~al\mbox{.}(2019)]%
        {Song2019ProfitAF}
\bibfield{author}{\bibinfo{person}{Tianshu Song}, \bibinfo{person}{Yongxin
  Tong}, {and} \bibinfo{person}{Shuyue Wei}.} \bibinfo{year}{2019}\natexlab{}.
\newblock \showarticletitle{Profit Allocation for Federated Learning}.
\newblock \bibinfo{journal}{\emph{2019 IEEE International Conference on Big
  Data (Big Data)}} (\bibinfo{year}{2019}), \bibinfo{pages}{2577--2586}.
\newblock


\bibitem[Tu et~al\mbox{.}(2021)]%
        {tu2021incentive}
\bibfield{author}{\bibinfo{person}{Xuezhen Tu}, \bibinfo{person}{Kun Zhu},
  \bibinfo{person}{Nguyen~Cong Luong}, \bibinfo{person}{Dusit Niyato},
  \bibinfo{person}{Yang Zhang}, {and} \bibinfo{person}{Juan Li}.}
  \bibinfo{year}{2021}\natexlab{}.
\newblock \showarticletitle{Incentive Mechanisms for Federated Learning: From
  Economic and Game Theoretic Perspective}.
\newblock \bibinfo{journal}{\emph{arXiv preprint arXiv:2111.11850}}
  (\bibinfo{year}{2021}).
\newblock


\bibitem[Van~Roessel et~al\mbox{.}(2017)]%
        {van2017potentials}
\bibfield{author}{\bibinfo{person}{Ilse Van~Roessel}, \bibinfo{person}{Matthias
  Reumann}, {and} \bibinfo{person}{Angela Brand}.}
  \bibinfo{year}{2017}\natexlab{}.
\newblock \showarticletitle{Potentials and challenges of the health data
  cooperative model}.
\newblock \bibinfo{journal}{\emph{Public health genomics}}
  \bibinfo{volume}{20}, \bibinfo{number}{6} (\bibinfo{year}{2017}),
  \bibinfo{pages}{321--331}.
\newblock


\bibitem[Wu et~al\mbox{.}(2022)]%
        {wu2022motley}
\bibfield{author}{\bibinfo{person}{Shanshan Wu}, \bibinfo{person}{Tian Li},
  \bibinfo{person}{Zachary Charles}, \bibinfo{person}{Yu Xiao},
  \bibinfo{person}{Ziyu Liu}, \bibinfo{person}{Zheng Xu}, {and}
  \bibinfo{person}{Virginia Smith}.} \bibinfo{year}{2022}\natexlab{}.
\newblock \showarticletitle{Motley: Benchmarking heterogeneity and
  personalization in federated learning}.
\newblock \bibinfo{journal}{\emph{arXiv preprint arXiv:2206.09262}}
  (\bibinfo{year}{2022}).
\newblock


\bibitem[Xu and Lyu(2020)]%
        {Xu2020TowardsBA}
\bibfield{author}{\bibinfo{person}{Xinyi Xu} {and} \bibinfo{person}{L. Lyu}.}
  \bibinfo{year}{2020}\natexlab{}.
\newblock \showarticletitle{Towards Building a Robust and Fair Federated
  Learning System}.
\newblock \bibinfo{journal}{\emph{ArXiv}}  \bibinfo{volume}{abs/2011.10464}
  (\bibinfo{year}{2020}).
\newblock


\bibitem[Zeng et~al\mbox{.}(2021)]%
        {zeng2021improving}
\bibfield{author}{\bibinfo{person}{Yuchen Zeng}, \bibinfo{person}{Hongxu Chen},
  {and} \bibinfo{person}{Kangwook Lee}.} \bibinfo{year}{2021}\natexlab{}.
\newblock \showarticletitle{Improving Fairness via Federated Learning}.
\newblock \bibinfo{journal}{\emph{arXiv preprint arXiv:2110.15545}}
  (\bibinfo{year}{2021}).
\newblock


\bibitem[Zhang et~al\mbox{.}(2021a)]%
        {zhang2021unified}
\bibfield{author}{\bibinfo{person}{Fengda Zhang}, \bibinfo{person}{Kun Kuang},
  \bibinfo{person}{Yuxuan Liu}, \bibinfo{person}{Chao Wu}, \bibinfo{person}{Fei
  Wu}, \bibinfo{person}{Jiaxun Lu}, \bibinfo{person}{Yunfeng Shao}, {and}
  \bibinfo{person}{Jun Xiao}.} \bibinfo{year}{2021}\natexlab{a}.
\newblock \showarticletitle{Unified Group Fairness on Federated Learning}.
\newblock \bibinfo{journal}{\emph{arXiv preprint arXiv:2111.04986}}
  (\bibinfo{year}{2021}).
\newblock


\bibitem[Zhang et~al\mbox{.}(2020b)]%
        {Zhang2020HierarchicallyFF}
\bibfield{author}{\bibinfo{person}{Jingfeng Zhang}, \bibinfo{person}{Cheng Li},
  \bibinfo{person}{A. Robles-Kelly}, {and} \bibinfo{person}{M. Kankanhalli}.}
  \bibinfo{year}{2020}\natexlab{b}.
\newblock \showarticletitle{Hierarchically Fair Federated Learning}.
\newblock \bibinfo{journal}{\emph{ArXiv}}  \bibinfo{volume}{abs/2004.10386}
  (\bibinfo{year}{2020}).
\newblock


\bibitem[Zhang et~al\mbox{.}(2021b)]%
        {10.1145/3464419}
\bibfield{author}{\bibinfo{person}{Jie Zhang}, \bibinfo{person}{Zhihao Qu},
  \bibinfo{person}{Chenxi Chen}, \bibinfo{person}{Haozhao Wang},
  \bibinfo{person}{Yufeng Zhan}, \bibinfo{person}{Baoliu Ye}, {and}
  \bibinfo{person}{Song Guo}.} \bibinfo{year}{2021}\natexlab{b}.
\newblock \showarticletitle{Edge Learning: The Enabling Technology for
  Distributed Big Data Analytics in the Edge}.
\newblock \bibinfo{journal}{\emph{ACM Comput. Surv.}} \bibinfo{volume}{54},
  \bibinfo{number}{7}, Article \bibinfo{articleno}{151} (\bibinfo{date}{jul}
  \bibinfo{year}{2021}), \bibinfo{numpages}{36}~pages.
\newblock
\showISSN{0360-0300}
\urldef\tempurl%
\url{https://doi.org/10.1145/3464419}
\showDOI{\tempurl}


\bibitem[Zhang et~al\mbox{.}(2022)]%
        {9996132}
\bibfield{author}{\bibinfo{person}{Lefeng Zhang}, \bibinfo{person}{Tianqing
  Zhu}, \bibinfo{person}{Ping Xiong}, \bibinfo{person}{Wanlei Zhou}, {and}
  \bibinfo{person}{Philip~S. Yu}.} \bibinfo{year}{2022}\natexlab{}.
\newblock \showarticletitle{A Game-theoretic Federated Learning Framework for
  Data Quality Improvement}.
\newblock \bibinfo{journal}{\emph{IEEE Transactions on Knowledge and Data
  Engineering}} (\bibinfo{year}{2022}), \bibinfo{pages}{1--15}.
\newblock
\urldef\tempurl%
\url{https://doi.org/10.1109/TKDE.2022.3230959}
\showDOI{\tempurl}


\bibitem[Zhang et~al\mbox{.}(2020a)]%
        {9144224}
\bibfield{author}{\bibinfo{person}{Yuming Zhang}, \bibinfo{person}{Bohao Feng},
  \bibinfo{person}{Wei Quan}, \bibinfo{person}{Aleteng Tian},
  \bibinfo{person}{Keshav Sood}, \bibinfo{person}{Youfang Lin}, {and}
  \bibinfo{person}{Hongke Zhang}.} \bibinfo{year}{2020}\natexlab{a}.
\newblock \showarticletitle{Cooperative Edge Caching: A Multi-Agent Deep
  Learning Based Approach}.
\newblock \bibinfo{journal}{\emph{IEEE Access}}  \bibinfo{volume}{8}
  (\bibinfo{year}{2020}), \bibinfo{pages}{133212--133224}.
\newblock
\urldef\tempurl%
\url{https://doi.org/10.1109/ACCESS.2020.3010329}
\showDOI{\tempurl}


\end{thebibliography}

\ifarxiv 
\appendix
\clearpage 

\section{Proofs}\label{app:proofs}

\unifwell*

\begin{proof}
In this proof, we will denote the uniform federation error by $err^u$. Because Definition \ref{def:modular} has five components, this proof will have five sections. \\
\textbf{Property 1: }\\
For the first property, we wish to show that (for any federating coalition $\col$), the large player always has strictly lower error than the small player, or: 
$$\frac{err_s(\col)}{err_l(\col)} \leq 1 \quad \Leftrightarrow \quad err_s(\col) \geq err_l(\col)$$
Using the form of error found in Lemma \ref{lem:error} (Equation \eqref{eq:err}), we can rewrite this as: 
$$\frac{\mue}{\sum_{i \in \col} \ndraw_i} + \var \cd \frac{\sum_{i \in \col}\ndraw_i^2  - \ns^2+ \p{\sum_{i\in \col}\ndraw_i - \ns}^2}{\p{\sum_{i \in \col}\ndraw_i}^2}$$
$$\geq \frac{\mue}{\sum_{i \in \col} \ndraw_i} + \var \cd \frac{\sum_{i \in \col}\ndraw_i^2  - \nlv^2+ \p{\sum_{i\in \col}\ndraw_i - \nlv}^2}{\p{\sum_{i \in \col}\ndraw_i}^2}$$
Cancelling common terms: 
\begin{align*}
\sum_{i \in \col}\ndraw_i^2  - \ns^2+ \p{\sum_{i\in \col}\ndraw_i - \ns}^2 &\geq \sum_{i \in \col}\ndraw_i^2  - \nlv^2+ \p{\sum_{i\in \col}\ndraw_i - \nlv}^2
\end{align*}
Which expands out to: 
\begin{align*}
 - \ns^2+ \p{\sum_{i\in \col}\ndraw_i}^2 -2 \cd \ns \cd \p{\sum_{i\in \col}\ndraw_i}  + \ns^2\\
 \geq  - \nlv^2+ \p{\sum_{i\in \col}\ndraw_i}^2 -2 \cd \nlv \cd \p{\sum_{i\in \col}\ndraw_i} + \nlv^2
 \end{align*}
 Cancelling common terms gives: 
\begin{align*}
 -2 \cd \ns \cd \p{\sum_{i\in \col}\ndraw_i}  &\geq  -2 \cd \nlv \cd \p{\sum_{i\in \col}\ndraw_i}\\
  \ns  &\leq  \nlv
\end{align*}
This means that the small player has higher error whenever it has fewer samples than the large player- and strictly higher error whenever it has strictly fewer samples. \\
\textbf{Property 2: }\\
For the second property, we wish to show that the worst case ratio of errors occurs in the two-player case. That is, 
 $$\frac{err_s^u\p{C}}{err_l^u\p{C}} \leq \frac{err_s^u\p{\{\ns, \nlv\}}}{err_l^u\p{\{\ns, \nlv}\}} $$
In order to prove this, we'll show something stronger.  Take any player $k$ (with $k \ne s, l$). Then, we will show that that the derivative of the ratio with respect to the size of player $k$ ($\ndraw_k$) is always negative: 
$$\frac{\partial}{\partial\ndraw_k}\frac{err_s^u\p{C}}{err_l^u\p{C}} < 0$$
For uniform federation, we know that 
$$\lim_{\ndraw_k \rightarrow 0} err_{s}^u(C) = err_s^u(C \setminus \{\ndraw_k\})$$
and similarly for the large player. Then, we can convert $C$ into $\{\ns, \nlv\}$ by sending the size of every other player to 0: by the result we are trying to prove, this will only ever increase the ratio of their errors. 

Next, we will start the proof. $$\frac{\partial}{\partial\ndraw_k}\frac{err_s^u\p{C}}{err_l^u\p{C}}  = \frac{err_s^u(C)' \cd err_l^u(C) - err_s^u(C) \cd err_l^u(C)'}{(err_l^u(C))^2} $$
This is negative whenever: 
$$err_s^u\p{C}' \cd err_l^u\p{C}< err_s^u\p{C} \cd err_l^u\p{C}'$$
$$\frac{err_s^u\p{C}'}{err_s^u\p{C}} < \frac{err_l^u\p{C}'}{err_l^u\p{C}}$$
To show this result, we will calculate the lefthand side (ratio relating to the small player's error) and then show that it is less than the equivalent ratio for the large player. 
The error of the small player can be written: 
 $$\frac{\mue}{\total} + \var \frac{\sum_{i \in \col} \ndraw_i^2 + \total^2 - 2 \total \cd \ns}{\total^2}$$
The derivative with respect to $\ndraw_k$ is: 
 $$-\frac{\mue}{\total^2} + 2 \cd \var \frac{(\ndraw_k + \total - \ns) \cd \total - (\sum_{i \in \col} \ndraw_i^2 + \total^2 - 2 \total \cd \ns)}{\total^3}$$
 $$= \frac{-\mue \cd \total + 2\cd \var \cd \p{(\ndraw_k + \total - \ns) \cd \total - (\sum_{i \in \col} \ndraw_i^2 + \total^2 - 2 \total \cd \ns)}}{\total^3}$$
 The term we're interested in is the ratio of the derivative of the small player's error (which we just calculated) to the error of the small player. Note that the error itself can be written as: 
 $$\frac{\mue \cd \total + \var \p{\sum_{i \in \col} \ndraw_i^2 + \total^2 - 2 \total \cd \ns}}{\total^2}$$
 So the ratio of the derivative to the overall error is:  
 $$\frac{1}{\total} \cd \frac{-\mue \cd \total + 2\cd \var \cd \p{(\ndraw_k + \total - \ns) \cd \total - (\sum_{i \in \col} \ndraw_i^2 + \total^2 - 2 \total \cd \ns)}}{\mue \cd \total + \var \p{\sum_{i \in \col} \ndraw_i^2 + \total^2 - 2 \total \cd \ns}}$$
 What we would like to show is that the above term is less than the analogous term for the $\nlv$ variant, which can be symmetrically written as: 
 $$\frac{1}{\total} \cd \frac{-\mue \cd \total + 2\cd \var \cd \p{(\ndraw_k + \total - \nlv) \cd \total - (\sum_{i \in \col} \ndraw_i^2 + \total^2 - 2 \total \cd \nlv)}}{\mue \cd \total + \var \p{\sum_{i \in \col} \ndraw_i^2 + \total^2 - 2 \total \cd \nlv}}$$
 This is equivalent to proving: 
 $$\frac{B + 2 \var \cd \total \cd \ns}{A-2 \cd \var \cd \total \cd \ns} <\frac{B + 2 \var \cd \total \cd \nlv}{A-2 \cd \var \cd \total \cd \nlv} $$
 for 
 $$A = \mue \cd \total + \var \cd (\sum_{i \in \col} \ndraw_i^2 + \total^2) \quad B = - \mue \cd \total + 2 \var \cd ((\ndraw_k + \total) \cd \total - (\sum_{i \in \col} \ndraw_i^2 + \total^2))$$
 We can cross multiply the inequality to get: 
 $$(B + 2 \var \cd \total \cd \ns) \cd(A-2 \cd \var \cd \total \cd \nlv) < (B + 2 \var \cd \total \cd \nlv) \cd (A-2 \cd \var \cd \total \cd \ns) $$
  $$0 < (B + 2 \var \cd \total \cd \nlv) \cd (A-2 \cd \var \cd \total \cd \ns)-(B + 2 \var \cd \total \cd \ns) \cd(A-2 \cd \var \cd \total \cd \nlv) $$
  Which simplifies to: 
  $$0 < 2 \cd (A+B)\cd (\nlv - \ns) \cd \total \cd \var$$
  Because we have assumed that $\nlv > \ns$, this is true if $A + B > 0$. We can evaluate $A+B$ as being: 
  \begin{align*}
     =& \mue \cd \total + \var \cd \p{\sum_{i \in \col} \ndraw_i^2 + \total^2} - \mue \cd \total \\
      & + 2 \var \cd ((\ndraw_k + \total) \cd \total - (\sum_{i \in \col} \ndraw_i^2 + \total^2))\\
    =& 2 \var \cd (\ndraw_k + \total) \cd \total) - \var \cd \p{\sum_{i \in \col} \ndraw_i^2 - \total^2}\\
    =& 2 \var \cd \ndraw_k \cd \total + 2 \var \cd \total^2 - \var \cd \p{\sum_{i \in \col} \ndraw_i^2 - \total^2}\\
    =& 2 \var \cd \ndraw_k \cd \total + \var \cd \p{\total^2 - \sum_{i \in \col}\ndraw_i^2}\\
    >& 0
  \end{align*}
  as desired. \newline \\
  The remaining properties all relate to the ratio of errors for the two-player group $\{\ns, \nlv\}$. For uniform federation, this ratio can be written as: 
  $$\frac{err_s^u\p{\{\ns, \nlv\}}}{err_l^u\p{\{\ns, \nlv}\}} = \frac{\mue \cd (\ns + \nlv) + 2 \var \cd \nlv^2}{\mue \cd (\ns + \nlv) + 2 \var \cd \ns^2}$$
  \textbf{Property 3: }\\

  This property relates to the derivative of the ratio with respect to the size of the large player ($\nlv$): 
  $$\frac{\partial}{\partial\nlv} \frac{\mue \cd (\ns + \nlv) + 2 \var \cd \nlv^2}{\mue \cd (\ns + \nlv) + 2 \var \cd \ns^2} $$
  This derivative is given by: 
  $$\frac{(\mue  + 4 \cd \var \cd \nlv) (\mue (\ns + \nlv) +2 \var \cd \ns^2) - (\mue (\ns + \nlv) + 2 \var \cd \nlv^2) \mue}{(\mue \cd (\ns + \nlv) + 2 \var \cd \ns^2)^2}$$
  The derivative is \emph{positive} whenever: 
  $$ (\mue  + 4 \cd \var \cd \nlv) \cd (\mue \cd (\ns + \nlv) +2 \var \cd \ns^2) - (\mue \cd (\ns + \nlv) + 2 \var \cd \nlv^2) \cd \mue > 0 $$
  Pulling over terms and expanding gives: 
  $$\mue ^2 \cd (\ns + \nlv) + 2 \cd \mue \cd \var \cd \ns^2 + 4 \cd \var \cd \nlv \cd \mue \cd (\ns + \nlv) + 8 \cd \sigma^4 \cd \nlv \cd \ns^2 >$$
  $$ \mue^2 \cd (\ns + \nlv) + 2 \var \cd \mue \cd \nlv^2 $$
  Which simplifies to: 
  $$ 2 \cd \mue \cd \var \cd \ns^2 + 2 \cd \var \cd \nlv \cd \mue \cd (2\ns + \nlv) + 8 \cd \sigma^4 \cd \nlv \cd \ns^2 >0$$
  \textbf{Property 4: }\\
  This property relates to the derivative of the ratio with respect to the size of the small player ($\ns$): 
  $$\frac{\partial}{\partial\ns} \frac{\mue \cd (\ns + \nlv) + 2 \var \cd \nlv^2}{\mue \cd (\ns + \nlv) + 2 \var \cd \ns^2} $$
  This derivative is given by: 
  $$\frac{\mue (\mue (\ns + \nlv) + 2 \var \cd \ns^2) - (\mue (\ns + \nlv) + 2 \var \cd \nlv^2) \cd (\mue + 4 \cd \var \cd \ns)}{(\mue (\ns + \nlv) + 2 \var \cd \ns^2)^2}$$
  The derivative is \emph{negative} whenever: 
  $$\mue \cd (\mue \cd (\ns + \nlv) + 2 \var \cd \ns^2) < (\mue \cd (\ns + \nlv) + 2 \var \cd \nlv^2) \cd (\mue + 4 \cd \var \cd \ns)$$
  Expanding gives: 
  \begin{align*}
\mue \cd (\mue \cd (\ns + \nlv) + 2 \var \cd \ns^2)\\
 < \mue^2 \cd (\ns + \nlv) + 2 \var \cd \nlv^2 \cd \mue + 4 \cd \var \cd \ns \cd \mue  \cd (\ns + \nlv) + 2 \var \cd \mue  \cd \nlv^2
  \end{align*}
  Simplifying: 
  $$0 <  2 \var \cd \nlv^2 \cd \mue + 2 \cd \var \cd \ns \cd \mue  \cd (\ns + 2\nlv) + 2 \var \cd \mue  \cd \nlv^2 $$
  which is satisfied.\\
  \textbf{Property 5:}\\
    This can be found by rewriting and applying the limit. 
    \begin{align*}
    \lim_{\frac{\ns}{\nlv} \rightarrow 0}\frac{err_s\p{\{\ns, \nlv}\}}{err_l\p{\{\ns, \nlv}\}} =& \lim_{\frac{\ns}{\nlv} \rightarrow 0} \frac{\mue \cd (\ns + \nlv) + 2 \var \cd \nlv^2}{\mue \cd (\ns + \nlv) + 2 \var \cd \ns^2}\\
    =& \lim_{\frac{\ns}{\nlv} \rightarrow 0} \frac{\mue \cd \frac{\ns}{\nlv} + \mue + 2 \var \cd \nlv}{\mue \cd \frac{\ns}{\nlv} + \mue + 2 \var \cd \ns\cd \frac{\ns}{\nlv}}   \\
    =&\frac{\mue + 2 \var \cd \nlv}{\mue} \\
    =& \frac{\frac{\mue}{\nlv} + 2 \var}{\frac{\mue}{\nlv}}
    \end{align*}
  
\end{proof}

\begin{lemma}\label{lem:finegrainederror}
Consider a set of $\nplayer$ federating players, using optimal fine-grained federation. The error player $j$ (with $\ndraw_j$ samples) experiences can be given by: 
$$\frac{\frac{\mue}{\ndraw_j}}{V_j \cd T} \cd  \p{1 + \var \p{T - \frac{1}{V_j}}}$$
with $V_i = \var + \frac{\mue}{\ndraw_i}$ and $T = \sum_{i=1}^{\nplayer} \frac{1}{V_i}$. 
\end{lemma}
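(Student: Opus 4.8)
The plan is to derive the optimal fine-grained error directly, in the same spirit as Lemma \ref{lem:error} is obtained for uniform federation, but now allowing each player its own personalized weights. Under fine-grained federation, player $j$ forms the estimate $\sum_i w_i \hat\mean_i$ as a weighted average of all the local means, where the weights are chosen to minimize player $j$'s own expected error subject to the unbiasedness constraint $\sum_i w_i = 1$ (so the weight vector may differ from player to player). Writing $\hat\mean_i = \mean_i + \errvalmf_i$, where $\errvalmf_i$ is the zero-mean sampling error of $\hat\mean_i$ with variance $\err_i / \ndraw_i$ and is independent across players, I would split the error $\mathbb{E}\br{\p{\sum_i w_i \hat\mean_i - \mean_j}^2}$ into a sampling-noise part $\sum_i w_i^2 \cd \frac{\mue}{\ndraw_i}$ (taking expectations over the parameter draws, so that $\mathbb{E}[\err_i] = \mue$) and a squared-bias part $\mathbb{E}\br{\p{\sum_{i \ne j} w_i(\mean_i - \mean_j)}^2}$.

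The next step is to evaluate the bias part using the independence of the $\mean_i$ across players. The identities $\mathbb{E}\br{(\mean_i - \mean_j)^2} = 2\var$ and $\mathbb{E}\br{(\mean_i - \mean_j)(\mean_k - \mean_j)} = \var$ for distinct $i,k \ne j$ collapse it to $\var \sum_{i \ne j} w_i^2 + \var (1 - w_j)^2$, where $\sum_{i \ne j} w_i = 1 - w_j$ has been used. Substituting $V_i = \var + \frac{\mue}{\ndraw_i}$ and $\frac{\mue}{\ndraw_j} = V_j - \var$ and collecting terms, the full objective telescopes to the compact quadratic $err_j = \sum_i w_i^2 V_i - 2\var \cd w_j + \var$.

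I would then minimize this quadratic subject to $\sum_i w_i = 1$. A Lagrange multiplier argument gives $w_i = \frac{\lambda}{2 V_i}$ for $i \ne j$ and $w_j = \frac{\lambda + 2\var}{2 V_j}$; imposing the constraint and recognizing $T = \sum_i \frac{1}{V_i}$ yields $\lambda = \frac{2}{T}\p{1 - \frac{\var}{V_j}}$. Substituting the optimal weights back, the term $\sum_i w_i^2 V_i$ separates into $\frac{\lambda^2}{4} T$ plus $V_j$-dependent corrections that cancel neatly against $-2\var w_j + \var$, leaving $err_j = \frac{1}{T}\p{1 - \frac{\var}{V_j}}^2 + \var\p{1 - \frac{\var}{V_j}}$. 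Factoring out $1 - \frac{\var}{V_j} = \frac{\mue/\ndraw_j}{V_j}$ and pulling $\frac{1}{T}$ outside the bracket reproduces the claimed form $\frac{\mue/\ndraw_j}{V_j \cd T}\p{1 + \var\p{T - \frac{1}{V_j}}}$.

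I expect the main obstacle to be the bookkeeping in the bias term: correctly handling the covariance structure of the differences $\mean_i - \mean_j$, since the shared $-\mean_j$ couples every off-diagonal product and is easy to mishandle. A secondary source of error is verifying that the cancellations after back-substituting the optimal weights are exact, which depends on the precise form of $\lambda$. Both steps are routine algebra but error-prone, so the bulk of the care goes into those two computations rather than into any conceptually hard argument.
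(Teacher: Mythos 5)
Your proof is correct, but it takes a genuinely different (and arguably cleaner) route than the paper's. The paper does not re-derive the error decomposition or the optimal weights: it imports both from \cite{donahue2020model} (the general weighted-average error from their Lemma 4.2, and the closed-form optimal weights $\vm_{jj}$, $\vm_{jk}$ from their Lemma 7.1), splits the error into three components, substitutes the explicit weights into each, and grinds through the simplification until the common factor $\frac{\mue/\ndraw_j}{(1+V_j(T-1/V_j))^2}$ emerges and the bracketed sum factors as $(1+\var(T-1/V_j))\cd V_j T$. You instead derive the objective from scratch, collapse it to the quadratic $\sum_i w_i^2 V_i - 2\var w_j + \var$ (I verified this identity, and also your covariance bookkeeping $\var\sum_{i\ne j}w_i^2 + \var(1-w_j)^2$ for the bias term), and then solve the constrained minimization directly by Lagrange multipliers; the back-substitution $err_j = \frac{1}{T}\p{1-\frac{\var}{V_j}}^2 + \var\p{1-\frac{\var}{V_j}}$ checks out and factors to the claimed form. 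What your approach buys is self-containment and much lighter algebra at the substitution stage, since the compact quadratic makes the optimal weights and the optimal value fall out together; what the paper's approach buys is consistency with the prior work's notation and a proof that doubles as a verification that Lemma 7.1's weights are the right ones to use. The only things worth making explicit in your write-up are (i) that the cross terms between the sampling noise $\errvalmf_i$ and the bias differences $\mean_k - \mean_j$ vanish in expectation, which justifies the two-part split, and (ii) that ``optimal fine-grained federation'' is defined as minimizing the \emph{expected} error (expectation over both $\err_i \sim \Theta$ and the $\mean_i$), so optimizing the weights against the $\mue$-averaged objective is the intended problem rather than an approximation.
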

\begin{proof}
Lemma 4.2 in \cite{donahue2020model} gives the expected error as:
$$\mue\sum_{i=1}^{\nplayer}\vm_{ji}^2\cd \frac{1}{\ndraw_i} + \p{\sum_{i\ne j}\vm_{ji}^2 + \p{\sum_{i\ne j}\vm_{ji}}^2}\cd \var$$
Fine-grained federation requires that $\vm_{jj} + \sum_{i\ne j}\vm_{ji} = 1$, so $\vm_{jj} = 1 - \sum_{i \ne j} \vm_{ji}$, or $\sum_{i\ne j} \vm_{ji} = 1- \vm_{jj}$. We can then write the error out as: 
$$\mue \sum_{i=1}^{\nplayer}\frac{\vm_{ji}^2}{\ndraw_i} + \p{\sum_{i\ne j}\vm_{ji}^2 + \p{1 - \vm_{jj}}^2} \cd \var$$
Or: 
$$\mue \cd \frac{\vm_{jj}^2}{\ndraw_j} + (1-\vm_{jj})^2 \cd \var + \sum_{i \ne j}\vm_{ji}^2 \cd \p{\frac{\mue}{\ndraw_i} + \var}$$
We note that this last form has three components (the first involving $\mue$ and the latter two involving $\var$). 

Lemma 7.1 from the same work shows that the optimal $\vm_{ji}$ weights can be calculated as follows: Define $V_i = \var + \frac{\mue}{\ndraw_i}$. Then, the value of $\{\vm_{ji}\}$ that minimizes player $j$'s error is: 
$$\vm_{jj} = \frac{1 + \var \sum_{i\ne j}\frac{1}{V_i}}{1 + V_j\sum_{i\ne j}\frac{1}{V_i}}$$
$$\vm_{jk} = \frac{1}{V_k}\cd \frac{V_j-\var}{1 + V_j \sum_{i\ne j}\frac{1}{V_i}} \quad k\ne j$$
This analysis follows by plugging in the optimal weights and then simplifying. 
The first component of the error becomes: 
$$\mue \cd \frac{v_{jj}^2}{\ndraw_j} = \frac{\mue}{\ndraw_j} \cd \frac{(1 + \var \sum_{i\ne j}\frac{1}{V_i})^2}{(1 + V_j\sum_{i\ne j}\frac{1}{V_i})^2}$$
The next component becomes: 
$$ (1-\vm_{jj})^2 \cd \var =  \var \cd \p{
\frac{1 + V_j \sum_{i \ne j} \frac{1}{V_i}- 1 - \var \sum_{i \ne j} \frac{1}{V_i}}{1 + V_j \sum_{i \ne j} \frac{1}{V_i}}}^2 $$
$$= \var \cd \frac{\frac{\mue^2}{\ndraw_j^2} \cd (\sum_{i \ne j}\frac{1}{V_i})^2}{(1 + V_j \sum_{i \ne j} \frac{1}{V_i})^2}$$
The last component: 
$$\sum_{i \ne j}\vm_{ji}^2 \cd \p{\frac{\mue}{\ndraw_i} + \var} =  \sum_{i \ne j}\p{\frac{1}{V_i}\cd \frac{\frac{\mue}{\ndraw_j}}{1 + V_j \sum_{k \ne j}\frac{1}{V_k}}}^2\cd V_i $$
$$=\frac{\mue^2}{\ndraw_j^2} \cd \sum_{i \ne j}\frac{1}{V_i}\cd \frac{1}{(1 + V_j \sum_{k \ne j}\frac{1}{V_k})^2} = \frac{\mue^2}{\ndraw_j^2} \cd \frac{\sum_{i \ne j} \frac{1}{V_i}}{(1 + V_j \cd \sum_{k\ne j} \frac{1}{V_k})^2}$$
When we combine them together, we can pull out a common factor of: 
$$\frac{\frac{\mue}{\ndraw_j}}{(1 + V_j \cd \sum_{i \ne j} \frac{1}{V_i})^2}$$
The other terms become: 
$$\p{1 + \var \sum_{i \ne j} \frac{1}{V_i}}^2 + \var \cd \frac{\mue}{\ndraw_j} \cd \p{\sum_{i \ne j} \frac{1}{V_i}}^2 + \frac{\mue}{\ndraw_j} \cd \sum_{i \ne j} \frac{1}{V_i}$$ 
Combining these together gives: 
$$\frac{\frac{\mue}{\ndraw_j}}{\p{1 + V_j \cd \p{T - \frac{1}{V_j}}}^2} \cd \p{1 + \var \p{T - \frac{1}{V_j}}}^2 $$
$$+ \var \cd \frac{\mue}{\ndraw_j} \cd \p{T - \frac{1}{V_j}}^2 + \frac{\mue}{\ndraw_j} \cd \p{T - \frac{1}{V_j}}$$
for $T = \sum_{i=1}^{\nplayer} \frac{1}{V_i}$. Next, we begin simplifying, noting that: 
$$\frac{\frac{\mue}{\ndraw_j}}{\p{1 + V_j \cd \p{T - \frac{1}{V_j}}}^2} = \frac{\frac{\mue}{\ndraw_j}}{\p{1 + V_j \cd T - 1}^2} = \frac{\frac{\mue}{\ndraw_j}}{V_j^2 \cd T^2}$$
We can also simplify the other coefficient:

\begin{align*}
 &  \p{1 + \var \p{T - \frac{1}{V_j}}}^2 + \var \cd \frac{\mue}{\ndraw_j} \cd \p{T - \frac{1}{V_j}}^2 + \frac{\mue}{\ndraw_j} \cd \p{T - \frac{1}{V_j}}\\
=&\p{1 + \var \p{T - \frac{1}{V_j}}}^2 + \frac{\mue}{\ndraw_j} \cd \p{T - \frac{1}{V_j}} \cd \p{\var \cd \p{T - \frac{1}{V_j}} + 1}\\
 =& \p{1 + \var \p{T - \frac{1}{V_j}}} \cd \p{1 + \var \p{T - \frac{1}{V_j}} + \frac{\mue}{\ndraw_j} \cd \p{T - \frac{1}{V_j}}}\\
 =& \p{1 + \var \p{T - \frac{1}{V_j}}} \cd \p{1 +\p{\var +  \frac{\mue}{\ndraw_j} }\cd \p{T - \frac{1}{V_j}}}\\
 =& \p{1 + \var \p{T - \frac{1}{V_j}}} \cd \p{1 +V_j\cd \p{T - \frac{1}{V_j}}}\\
 =& \p{1 + \var \p{T - \frac{1}{V_j}}} \cd \p{1 +V_j\cd T -1}\\
 =& \p{1 + \var \p{T - \frac{1}{V_j}}} \cd V_j\cd T 
\end{align*}
Next, we combine the two terms together to get:  
$$\frac{\frac{\mue}{\ndraw_j}}{V_j \cd T} \cd  \p{1 + \var \p{T - \frac{1}{V_j}}}$$
as desired. 
\end{proof}

\finewell*
\begin{proof}
In this proof, we will denote fine federation error by $err^f$. This proof will be similar to that of Lemma \ref{lem:unifwell} and will again have five components corresponding to the five sections of Definition \ref{def:modular}.  \\
\textbf{Property 1:}\\
For the first property, we wish to show that, for any federating coalition, the small player gets higher error than the large player. That is, we wish to show: 
$$err^f_s(\col) \geq err^f_l(\col)$$
From Lemma \ref{lem:finegrainederror}, we know we can write each player's error as: 
$$  \frac{\mue}{\ns \cd V_s \cd T} \cd  \p{1 + \var \p{S+ \frac{1}{V_l}}} \geq \frac{\mue}{\nlv \cd V_l \cd T} \cd  \p{1 + \var \p{S + \frac{1}{V_s}}}$$
where we have $V_i = \frac{\mue}{\ndraw_i} + \var$, $T =\sum_{i \in \col } \frac{1}{V_i}$ and $S = T - \frac{1}{V_s} - \frac{1}{V_l}$. We know that $V_i, T >0$, so we can cancel common terms: 
\begin{align*}
\frac{1}{\ns \cd V_s} \cd  \p{1 + \var \p{S+ \frac{1}{V_l}}} &\geq \frac{1}{\nlv \cd V_l} \cd  \p{1 + \var \p{S + \frac{1}{V_s}}}\\
\nlv \cd V_l \p{1 + \var \p{S+ \frac{1}{V_l}}} &\geq \ns \cd V_s\cd  \p{1 + \var \p{S + \frac{1}{V_s}}}\\
\nlv \cd V_l + \var \cd S \cd \nlv \cd V_l + \var \cd \nlv &\geq \ns \cd V_s + \var \cd S \cd \ns \cd V_s + \var \cd \ns
\end{align*}
Next, we can plug in for $V_i = \frac{\mue}{\ndraw_i} + \var$: 
\begin{align*}
\mue + \nlv \cd \var + \var \cd S \p{\mue + \nlv \cd \var} + \var \cd \nlv \\
\geq \mue + \ns \cd \var + \var \cd S \p{\mue + \ns \cd \var} + \var \cd \ns\\
\end{align*}
Cancelling common terms gives: 
\begin{align*}
2 \cd \nlv \cd \var + \nlv \cd S \cd \sigma^4 & \geq  2 \cd \ns \cd \var + S \cd \ns \cd \sigma^4\\
\nlv &\geq \ns 
\end{align*}
This result tells us that the small player has higher error whenever it has fewer samples than the large player - and strictly higher error whenever it has strictly fewer samples. \\
\textbf{Property 2: }\\
For the second property, we again wish to show that the worst case ratio of errors occurs in the two-player case, or 
 $$\frac{err_s^f\p{C}}{err_l^f\p{C}} \leq \frac{err_s^f\p{\{\ns, \nlv\}}}{err_l^f\p{\{\ns, \nlv}\}} $$

From Lemma \ref{lem:finegrainederror}, we know that we can write the ratio of errors as: 
$$ \frac{ \frac{\mue}{\ns \cd V_s \cd T} \cd  \p{1 + \var \p{S+ \frac{1}{V_l}}}}{\frac{\mue}{\nlv \cd V_l \cd T} \cd  \p{1 + \var \p{S + \frac{1}{V_s}}}}$$
where we have $V_i = \frac{\mue}{\ndraw_i} + \var$, $T =\sum_{i \in \col } \frac{1}{V_i}$ and $S = T - \frac{1}{V_s} - \frac{1}{V_l}$. The ratio of errors can be simplified further:
$$=\frac{ \nlv \cd V_l\cd  \p{1 + \var \p{S+ \frac{1}{V_l}}}}{\ns \cd V_s \cd \p{1 + \var \p{S + \frac{1}{V_s}}}} = \frac{\nlv \cd V_l + \var \cd S \cd \nlv \cd V_l + \nlv \cd \var}{\ns \cd V_s + \var \cd S \cd \ns \cd V_s + \ns \cd \var}$$
Again, we will prove something stronger: that this ratio decreases as $S$ increases. Note that $S = \sum_{i \in \col, i \ne s, l}\frac{1}{V_i} = \sum_{i \in \col, i \ne s, l}\frac{1}{\frac{\mue}{\ndraw_i} + 2 \var}$, so as $\ndraw_i$ decreases, $S$ decreases as well.  We can rewrite the equation as: 
$$\frac{a+b \cd S}{c+d \cd S}$$
for: 
$$a=\nlv \cd (V_l + \var) \quad b = \var \cd \nlv \cd V_l$$
$$c = \ns \cd (V_s + \var) \quad d = \var \cd \ns \cd V_s$$
The key question relates to the derivative of the equation with respect to $S$: 
$$\frac{\partial}{\partial S}\frac{a+b \cd S}{c+d \cd S} = \frac{ b \cd (c + d\cd S) - (a + b \cd S) \cd d}{(c+d S)^2} = \frac{ b \cd c - a \cd d}{(c+d \cd  S)^2}$$
This is negative if: 
$$b\cd c<a \cd d$$
Plugging in for the values gives: 
$$\var \cd \nlv \cd \ns \cd V_l\cd  (V_s + \var)< \var \cd \nlv \cd \ns \cd V_s\cd  (V_l + \var)$$
$$\sigma^4 \cd \nlv \cd V_l\cd \ns < \sigma^4 \cd \ns \cd \nlv \cd  V_s$$
$$ V_l< V_s$$
$$ \frac{\mue}{\nlv} + \var < \frac{\mue}{\ns} + \var$$
which is satisfied because $\nlv > \ns$.  \newline \\
The remaining properties all relate to the ratio of errors for the two-player group $\{\ns, \nlv\}$. For fine-grained federation, this ratio can be written as: 
  $$\frac{err_s^f\p{\{\ns, \nlv\}}}{err_l^f\p{\{\ns, \nlv}\}} = \frac{\nlv \cd V_l \cd \p{1 + \var\cd \frac{1}{V_l}}}{\ns \cd V_s \cd  \p{1 + \var\cd  \frac{1}{V_s}}}$$
  $$= \frac{\nlv \cd V_l + \var\cd \nlv}{\ns \cd V_s + \var\cd  \ns}  = \frac{\nlv}{\ns} \cd \frac{2\var + \frac{\mue}{\nlv}}{2 \var + \frac{\mue}{\ns}} = \frac{2 \var \cd \nlv + \mue}{2 \var \cd \ns + \mue}$$
  The remaining three properties for fine-grained federation are extremely straightforward: \\
\textbf{Property 3: }\\
This property relates to the derivative of the ratio of the errors with respect to the size of the large player ($\nlv$): 
$$\frac{\partial}{\partial \nlv} \frac{2 \var \cd \nlv + \mue}{2 \var \cd \ns + \mue} = 2 \var > 0$$
\textbf{Property 4: }\\
This property relates to the derivative of the ratio of the errors with respect to the size of the small player ($\ns$):
$$\frac{\partial}{\partial \ns} \frac{2 \var \cd \nlv + \mue}{2 \var \cd \ns + \mue} = \frac{0 - 2 \var \cd (2 \var \cd \nlv + \mue)}{(2 \var \cd \ns + \mue)^2} < 0$$
\textbf{Property 5: }\\
Finally, this property can be seen by applying the limit: 
$$\lim_{\frac{\ns}{\nlv} \rightarrow 0}\frac{err_s\p{\{\ns, \nlv}\}}{err_l\p{\{\ns, \nlv}\}} =\lim_{\frac{\ns}{\nlv} \rightarrow 0} \frac{2 \var \cd \nlv + \mue}{2 \var \cd \ns + \mue}$$
$$= \lim_{\frac{\ns}{\nlv} \rightarrow 0} \frac{2 \var + \frac{\mue}{\nlv}}{2 \var \cd \frac{\ns}{\nlv} + \frac{\mue}{\nlv}} = \frac{ \frac{\mue}{\nlv} + 2 \var}{\frac{\mue}{\nlv}}$$
\end{proof}

\finestab*
\begin{proof}
As mentioned previously, the expected error for the fine-grained case looks like: 
$$\mue \sum_{i=1}^{\nplayer}\frac{\vm_{ji}^2}{\ndraw_i} + \p{\sum_{i\ne j}\vm_{ji}^2 + \p{\sum_{i\ne j}\vm_{ji}}^2} \cd \var$$
We require that $\vm_{jj} + \sum_{i\ne j}\vm_{ji} = 1$, so $\vm_{jj} = 1 - \sum_{i \ne j} \vm_{ji}$, or $\sum_{i\ne j} \vm_{ji} = 1- \vm_{jj}$. We can then write the error out as: 
$$\mue \sum_{i=1}^{\nplayer}\frac{\vm_{ji}^2}{\ndraw_i} + \p{\sum_{i\ne j}\vm_{ji}^2 + \p{1 - \vm_{jj}}^2} \cd \var$$
Or: 
$$\mue \cd \frac{\vm_{jj}^2}{\ndraw_j} + (1-\vm_{jj})^2 \cd \var + \sum_{i \ne j}\vm_{ji}^2 \cd \p{\frac{\mue}{\ndraw_i} + \var}$$
For optimal fine-grained federation, we have that the weights are given by: 
$$\vm_{jj} = \frac{1 + \var \sum_{i\ne j}\frac{1}{V_i}}{1 + V_j\sum_{i\ne j}\frac{1}{V_i}}$$
$$\vm_{jk} = \frac{1}{V_k}\cd \frac{V_j-\var}{1 + V_j \sum_{i\ne j}\frac{1}{V_i}} \quad k\ne j$$
where we have $V_i = \var + \frac{\mue}{\ndraw_i}$. 
In order for sub-proportional error ($\alpha \leq 1$) to be satisfied, we need: 
$$err_s(\col) \leq err_l(\col) \cd \frac{\ndraw_l}{\ndraw_s} \quad \ns \leq \nlv$$
$$err_s(\col) \cd \ndraw_s \leq err_l(\col) \cd \ndraw_l \quad \ns \leq \nlv$$

We will start by taking the (weighted) error form for fine-grained federation, plugging in for the $\vm_{jj}, \vm_{jk}$ weights, and then simplifying. \\

The first component of the error (involving $\mue$ and $v_{jj}$) becomes: 
$$\ndraw_j \cd \mue \cd \frac{v_{jj}^2}{\ndraw_j} = \mue \cd \frac{(1 + \var \sum_{i\ne j}\frac{1}{V_i})^2}{(1 + V_j\sum_{i\ne j}\frac{1}{V_i})^2}$$
The component involving $\var$ and  $v_{jj}$ becomes: 

\begin{align*}
\ndraw_j \cd (1-\vm_{jj})^2 \cd \var &= \ndraw_j \cd \var \cd \p{
\frac{1 + V_j \sum_{i \ne j} \frac{1}{V_i}- 1 - \var \sum_{i \ne j} \frac{1}{V_i}}{1 + V_j \sum_{i \ne j} \frac{1}{V_i}}}^2 \\
& = \ndraw_j \cd \var \cd \frac{\frac{\mue^2}{\ndraw_j^2} \cd (\sum_{i \ne j}\frac{1}{V_i})^2}{(1 + V_j \sum_{i \ne j} \frac{1}{V_i})^2}\\
&=\var \cd\frac{\mue^2}{\ndraw_j} \frac{(\sum_{i \ne j}\frac{1}{V_i})^2}{(1 + V_j \sum_{i \ne j} \frac{1}{V_i})^2}
\end{align*}

The last component, involving $v_{ji}$ becomes: 
\begin{align*}
\ndraw_j \cd \sum_{i \ne j}\vm_{ji}^2 \cd \p{\frac{\mue}{\ndraw_i} + \var} &= \ndraw_j \cd \sum_{i \ne j}\p{\frac{1}{V_i}\cd \frac{\frac{\mue}{\ndraw_j}}{1 + V_j \sum_{k \ne j}\frac{1}{V_k}}}^2\cd V_i \\
&=\frac{\mue^2}{\ndraw_j} \cd \sum_{i \ne j}\frac{1}{V_i}\cd \frac{1}{(1 + V_j \sum_{k \ne j}\frac{1}{V_k})^2} \\
&= \frac{\mue^2}{\ndraw_j} \cd \frac{\sum_{i \ne j} \frac{1}{V_i}}{(1 + V_j \cd \sum_{k\ne j} \frac{1}{V_k})^2}
\end{align*}
Taken together, these components become: 
$$\mue \cd \frac{(1 + \var \sum_{i\ne j}\frac{1}{V_i})^2}{(1 + V_j\sum_{i\ne j}\frac{1}{V_i})^2} +  \var \cd\frac{\mue^2}{\ndraw_j} \frac{(\sum_{i \ne j}\frac{1}{V_i})^2}{(1 + V_j \sum_{i \ne j} \frac{1}{V_i})^2}$$
$$ + \frac{\mue^2}{\ndraw_j} \cd \frac{\sum_{i \ne j} \frac{1}{V_i}}{(1 + V_j \cd \sum_{k\ne j} \frac{1}{V_k})^2}$$
When we combine them together, we can pull out a common factor to rewrite this as: 
$$\frac{\mue}{(1 + V_j \cd \sum_{i \ne j} \frac{1}{V_i})^2} \cd \p{1 + \var \sum_{i \ne j} \frac{1}{V_i}}^2$$ 
$$ + \frac{\mue}{(1 + V_j \cd \sum_{i \ne j} \frac{1}{V_i})^2} \cd \p{ \var \cd \frac{\mue}{\ndraw_j} \cd \p{\sum_{i \ne j} \frac{1}{V_i}}^2 + \frac{\mue}{\ndraw_j} \cd \sum_{i \ne j} \frac{1}{V_i}}$$ 

We could rewrite this as: 
$$\frac{\mue}{\p{1 + V_j \cd \p{T - \frac{1}{V_j}}}^2} \cd \p{\p{1 + \var \p{T - \frac{1}{V_j}}}}^2 $$
$$+ \var \cd \frac{\mue}{\ndraw_j} \cd \p{T - \frac{1}{V_j}}^2 + \frac{\mue}{\ndraw_j} \cd \p{T - \frac{1}{V_j}}$$
for $T = \sum_{i=1}^{\nplayer} \frac{1}{V_i}$. Next, we begin simplifying, noting that: 
$$\frac{\mue}{\p{1 + V_j \cd \p{T - \frac{1}{V_j}}}^2} = \frac{\mue}{\p{1 + V_j \cd T - 1}^2} = \frac{\mue}{V_j^2 \cd T^2}$$
Simplifying the other coefficient gives: 
\begin{align*}
& \p{1 + \var \p{T - \frac{1}{V_j}}}^2 + \var \cd \frac{\mue}{\ndraw_j} \cd \p{T - \frac{1}{V_j}}^2 + \frac{\mue}{\ndraw_j} \cd \p{T - \frac{1}{V_j}}\\
&=\p{1 + \var \p{T - \frac{1}{V_j}}}^2 + \frac{\mue}{\ndraw_j} \cd \p{T - \frac{1}{V_j}} \cd \p{\var \cd \p{T - \frac{1}{V_j}} + 1}\\
& = \p{1 + \var \p{T - \frac{1}{V_j}}} \cd \p{1 + \var \p{T - \frac{1}{V_j}} + \frac{\mue}{\ndraw_j} \cd \p{T - \frac{1}{V_j}}}\\
& = \p{1 + \var \p{T - \frac{1}{V_j}}} \cd \p{1 +\p{\var +  \frac{\mue}{\ndraw_j} }\cd \p{T - \frac{1}{V_j}}}\\
& = \p{1 + \var \p{T - \frac{1}{V_j}}} \cd \p{1 +V_j\cd \p{T - \frac{1}{V_j}}}\\
& = \p{1 + \var \p{T - \frac{1}{V_j}}} \cd \p{1 +V_j\cd T -1}\\
& = \p{1 + \var \p{T - \frac{1}{V_j}}} \cd V_j\cd T 
\end{align*}
Thus, both terms together simplify down to:   
$$\frac{\mue}{V_j \cd T} \cd  \p{1 + \var \p{T - \frac{1}{V_j}}}$$
Now, that we have simplified the form of the error, we can start looking at sub-proportionality. In this federating group, we are comparing two players $\ns \leq \nlv$ with 
$$V_s = \var + \frac{\mue}{\ns} \quad V_l = \var + \frac{\mue}{\nlv} $$
We will find it useful to rewrite the weighted error of the small player using $T' = T- \frac{1}{V_s} - \frac{1}{V_l}$: 
$$\frac{\mue}{V_s \cd T} \cd  \p{1 + \var \p{T' + \frac{1}{V_l}}}$$
And for the large player: 
$$\frac{\mue}{V_l \cd T} \cd  \p{1 + \var \p{T' + \frac{1}{V_s}}}$$
What we want to be true is that the small player has a lower weighted error, or: 
$$\frac{\mue}{V_s \cd T} \cd  \p{1 + \var \p{T' + \frac{1}{V_l}}}\leq \frac{\mue}{V_l \cd T} \cd  \p{1 + \var \p{T' + \frac{1}{V_s}}}$$
$$\frac{1}{V_s } \cd  \p{1 + \var \p{T' + \frac{1}{V_l}}}\leq \frac{1}{V_l} \cd  \p{1 + \var \p{T' + \frac{1}{V_s}}}$$
$$V_l \cd  \p{1 + \var \p{T' + \frac{1}{V_l}}}\leq V_s \cd  \p{1 + \var \p{T' + \frac{1}{V_s}}}$$
$$V_l+ \var \cd T'\cd V_l + \var \leq V_s+ \var \cd T'\cd V_s +\var$$
$$V_l+ \var \cd T'\cd V_l \leq V_s+ \var \cd T'\cd V_s$$
$$V_l \leq V_s$$
Note that because $V_i = \var + \frac{\mue}{\ndraw_i}$: 
$$\ns \leq \nlv \quad \Rightarrow \quad V_s \geq V_l$$
This means that the overall proof is concluded: using optimal fine-grained federation, sub-proportional error is always satisfied. 
\end{proof}

\egalstab*

\begin{proof}
We will show this result by proving the contrapositive:  if $\alpha >1$, as defined in Definition \ref{def:prop}, then at least one player wishes to leave that coalition for local learning (so stability is violated). 

In our analysis, we will consider two players $\ns < \nlv$, though the players could be arbitrarily situated with respect to $ \frac{\mue}{\var}$. We will assume that all of the players are federating together in coalition $A$. For notational convenience, we will write the federating coalition $A$ of interest as $A = \col \cup \{\nlv\}$, so $\col$ refers to all players except $\nlv$. 

For later portions of this proof, we will find it useful to rewrite the difference in error between the small and large player: 
$$err_S(\col \cup \{\nlv\}) -err_L(\col \cup \{\nlv\})= 2 \var \frac{\nlv - \ns}{\total_\col + \nlv}$$
We note that both terms have the same $\mue$ coefficient, so they differ only in the $\var$ component:

\begin{align*}
&  \var \cd \frac{\sum_{i \in \col}\ndraw_i^2 - \ndraw_s^2 + \nlv^2 + (\total_\col - \ns + \nlv)^2}{(\total_\col + \nlv)^2} - \var \frac{\sum_{i \in \col}\ndraw_i^2 + \total_\col^2}{(\total_\col + \nlv)^2}\\
=&\var \cd \frac{-\ns^2 + \nlv^2 + \total_\col^2 + \ns^2 + \nlv^2 -2 \ns \cd \total_\col + 2\nlv \cd \total_\col-2 \nlv \cd \ns -\total_\col^2}{(\total_\col + \nlv)^2} \\
=&\var \cd \frac{ 2\nlv^2 -2 \ns \cd \total_\col + 2\cd \nlv \cd \total_\col-2 \nlv \cd \ns}{(\total_\col + \nlv)^2} 
\end{align*}
Note that we can rewrite the numerator: 
$$2 (\total_\col + \nlv) \cd (\nlv - \ns) = 2 \cd (\total_\col \cd \nlv - \ns \cd \total_\col + \nlv^2 - \ns \cd \nlv)$$
So the overall difference reduces to: 
$$=\var \cd \frac{ 2 \cd (\nlv -\ns)}{\total_\col + \nlv} $$
as desired. 

Next, we will begin the formal proof. At a high level, what the proof does is first derive conditions where $\alpha > 1$, and then show that, when these conditions are satisfied, at least one player wishes to leave $\col$ for local learning. 

$\alpha > 1$ is implied whenever: 
$$\frac{err_S(\col \cup \{\nlv\})}{err_L(\col \cup \{\nlv\})} \geq \frac{\nlv}{\ns} $$
$$err_S(\col \cup \{\nlv\}) \geq \frac{\nlv}{\ns} \cd err_L(\col \cup \{\nlv\}) $$
(Note that our definition of $\alpha$ is defined with respect to the smaller player.)
Our goal will be to get a lower bound on $\nlv$, given that this occurs. 
Rewriting the subproportionality violation:
$$2 \var \frac{\nlv - \ns}{\total_C + \nlv} \geq \p{\frac{\nlv}{\ns}-1} \cd err_L(C)$$
Plugging in for the value of the large player's error: 
$$2 \var \frac{\nlv - \ns}{\total_C + \nlv} \geq \p{\frac{\nlv}{\ns}-1} \cd \p{\frac{\mue}{\total_C + \nlv} + \var \frac{\sum_{i \in C}\ndraw_i^2 + \total_C^2}{(\total_C+ \nlv)^2}}$$
Cancelling a $\total_\col + \nlv$ in the denominator: 
$$2 \var (\nlv - \ns) \geq \p{\frac{\nlv}{\ns}-1} \cd \p{\mue + \var \frac{\sum_{i \in C}\ndraw_i^2 + \total_C^2}{\total_C+ \nlv}}$$
Rearranging a $\frac{1}{\ns}$ in the right hand side: 
$$2 \var \cd (\nlv - \ns) \geq \p{\nlv-\ns} \cd \p{\frac{\mue}{\ns} + \frac{\var}{\ns} \frac{\sum_{i \in C}\ndraw_i^2 + \total_C^2}{\total_C+ \nlv}}$$
Cancelling $\nlv - \ns$ from both sides: 
$$2 \var \geq \frac{\mue}{\ns} + \frac{\var}{\ns} \frac{\sum_{i \in C}\ndraw_i^2 + \total_C^2}{\total_C+ \nlv}$$
Bringing over common terms: 
$$0\geq \frac{\mue}{\ns}-2 \var  + \frac{\var}{\ns} \frac{\sum_{i \in C}\ndraw_i^2 + \total_C^2}{\total_C+ \nlv}$$
Multiplying by $\total_\col + \nlv$: 
$$0\geq \frac{\mue}{\ns}\cd (\total_C + \nlv)-2 \var\cd (\total_C + \nlv)  + \frac{\var}{\ns} \p{\sum_{i \in C}\ndraw_i^2 + \total_C^2}$$
Collecting terms around $\nlv$: 
$$0\geq \total_C \cd \p{\frac{\mue}{\ns} -2 \var} + \frac{\var}{\ns} \p{\sum_{i \in C}\ndraw_i^2 + \total_C^2} + \p{\frac{\mue}{\ns} - 2 \var} \cd \nlv$$
Note that if $\frac{\mue}{\ns} \geq 2 \var$, then this inequality can never hold (all of the terms on the righthand side are positive, where we need at least one to be negative). So, in order for it to even be possible to have $\alpha \geq 1$, we need $\frac{\mue}{\ns} < 2 \var$, or $\frac{\mue}{2\var} < \ns$. Knowing this, we can pull over terms and flip the sign of the inequality: 
\begin{align*}
\p{2 \var \cd \total_C - \frac{\mue}{\ns} \cd \total_C - \frac{\var}{\ns} \cd \p{\sum_{i \in C} \ndraw_i^2 + \total_C^2}} \geq& \p{\frac{\mue}{\ns} - 2 \var} \cd \nlv\\
\frac{2 \var \cd \total_C - \frac{\mue}{\ns} \cd \total_C - \frac{\var}{\ns} \cd \p{\sum_{i \in C} \ndraw_i^2 + \total_C^2}}{\frac{\mue}{\ns} - 2 \var} \leq& \nlv\\
\frac{-2 \var \cd \total_C + \frac{\mue}{\ns} \cd \total_C + \frac{\var}{\ns} \cd \p{\sum_{i \in C} \ndraw_i^2 + \total_C^2}}{2 \var-\frac{\mue}{\ns} }  \leq& \nlv
\end{align*}

This gives us a lower bound on $\nlv$ for when sub-proportionality is violated. \\
Next, let's look at the condition for when the larger player would choose to leave. 
$$err_L(\col \cup \{\nlv\}) \geq err_L(\{\nlv\})$$
The form of the error in this case is given by: 
$$\frac{\mue}{\total_C + \nlv} + \var \cd \frac{\sum_{i \in C}\ndraw_i^2 + \total_C^2}{(\total_C + \nlv)^2} \geq \frac{\mue}{\nlv}$$
Simplifying: 
\begin{align*}
\var \cd \frac{\sum_{i \in C}\ndraw_i^2 + \total_C^2}{(\total_C + \nlv)^2} \geq& \frac{\mue}{\nlv}-\frac{\mue}{\total_C + \nlv}\\
\var \cd \frac{\sum_{i \in C}\ndraw_i^2 + \total_C^2}{(\total_C + \nlv)^2} \geq& \frac{\mue}{\nlv \cd (\total_C + \nlv)} \cd (\total_C + \nlv - \nlv)\\
\var \cd \frac{\sum_{i \in C}\ndraw_i^2 + \total_C^2}{\total_C + \nlv} \geq& \frac{\mue}{\nlv} \cd \total_C \\
\var \cd \frac{\sum_{i \in C}\ndraw_i^2 + \total_C^2}{\total_C} \geq& \frac{\mue}{\nlv} \cd (\total_C +\nlv)\\
\var \cd \frac{\sum_{i \in C}\ndraw_i^2 + \total_C^2}{\total_C^2} \geq& \frac{\mue}{\nlv\cd \total_C} \cd (\total_C +\nlv)\\
\var \cd \frac{\sum_{i \in C}\ndraw_i^2}{\total_C^2} + \var \geq& \frac{\mue\cd \total_S}{\nlv\cd \total_C}  + \frac{\mue \cd \nlv}{\nlv \cd \total_C}\\
\var \cd \frac{\sum_{i \in C}\ndraw_i^2}{\total_C^2} + \var \geq& \frac{\mue}{\nlv}  + \frac{\mue}{\total_C}\\
\var \cd \frac{\sum_{i \in C}\ndraw_i^2}{\total_C^2} + \var - \frac{\mue}{\total_C} \geq& \frac{\mue}{\nlv}\\
\nlv \cd \p{\var \cd \frac{\sum_{i \in C}\ndraw_i^2}{\total_C^2} + \var - \frac{\mue}{\total_C}} \geq& \mue
\end{align*}
We have a similar situation to before: if this coefficent is negative, then the inequality can never hold (no large player will ever wish to leave). The coefficient is negative when: 
$$\var \cd \frac{\sum_{i \in C}\ndraw_i^2}{\total_C^2} + \var - \frac{\mue}{\total_C}\leq 0$$
We're going to show that this can never happen so long as $\ns > \frac{\mue}{2 \var}$ (necessarily for sub-proportionality to be violated). Rewriting: 
$$\var \cd \sum_{i \in \col} \ndraw_i^2 +\var \cd \total_\col^2 < \mue \cd \total_\col$$
For simplicity, we will rewrite with $\col' = \col \setminus \{\ns\}$. Our term becomes: 
$$\var \cd \sum_{i \in \col'} \ndraw_i^2 +\var \cd  \ns^2+\var \cd (\total_{\col'} + \ns)^2 < \mue \cd \total_{\col'} + \mue \cd \ns $$
Expanding: 
$$\var \cd \sum_{i \in \col'} \ndraw_i^2 +\var \cd  \ns^2+\var \cd \total_{\col'}^2 + \var\cd  \ns^2 + 2 \var \cd \ns \cd \total_{\col'}< \mue \cd \total_{\col'} + \mue \cd \ns $$
$$\var \cd \sum_{i \in \col'} \ndraw_i^2 +\var \cd \total_{\col'}^2 + 2 \var \cd \ns \cd \total_{\col'} + 2 \var \cd  \ns^2< \mue \cd \total_{\col'} + \mue \cd \ns $$
We can show that this never holds by breaking it into two parts. The first one is to show that: 
$$2 \var \cd \ns^2 > \mue \cd \ns $$
which holds by the assumption that $\ns > \frac{\mue}{2\var}$. Next, we can show that
$$\var \cd \sum_{i \in \col'} \ndraw_i^2 +\var \cd \total_{\col'}^2 + 2 \var \cd \ns \cd \total_{\col'} > \mue \cd \total_{\col'} $$
This holds again by the assumption that $\ns > \frac{\mue}{2\var}$. Taken together, this shows that the coefficients on the original equation can never be negative. 

Returning to our original equation: 
$$\nlv \cd \p{\var \cd \frac{\sum_{i \in C}\ndraw_i^2}{\total_C^2} + \var - \frac{\mue}{\total_C}} \geq \mue$$
We know that the coefficient can never be negative, so we can rewrite it as: 
\begin{equation}\label{eq:unstable}
\frac{\mue}{\var \cd \frac{\sum_{i \in \col}\ndraw_i^2}{\total_\col^2} + \var - \frac{\mue}{\total_\col}} \leq \nlv
\end{equation}

As a reminder, our condition for $\alpha \geq 1$ is given by: 
\begin{equation}\label{eq:subpropviol} \frac{-2 \var \cd \total_C + \frac{\mue}{\ns} \cd \total_C + \frac{\var}{\ns} \cd \p{\sum_{i \in C} \ndraw_i^2 + \total_C^2}}{2 \var-\frac{\mue}{\ns} }  \leq \nlv
\end{equation}
What we would like to show is that the bound in Equation \eqref{eq:unstable} is lower than the bound in Equation \eqref{eq:subpropviol}: any situation where the $\nlv$ term is large enough so that $\alpha > 1$, we know that the large player already wishes to leave.

So, what we want to show is: 
$$\frac{\mue}{\var \cd \frac{\sum_{i \in \col}\ndraw_i^2}{\total_\col^2} + \var - \frac{\mue}{\total_\col}} \leq \frac{-2 \var \cd \total_C + \frac{\mue}{\ns} \cd \total_C + \frac{\var}{\ns} \cd \p{\sum_{i \in C} \ndraw_i^2 + \total_C^2}}{2 \var-\frac{\mue}{\ns} } $$
Multiply the top and bottom of the righthand side by $\frac{\ns}{\total_\col}$: 
$$\frac{\mue}{\var \cd \frac{\sum_{i \in \col}\ndraw_i^2 + \total_\col^2}{\total_\col^2} - \frac{\mue}{\total_\col}} \leq \frac{\mue -2 \var \cd \ns + \frac{\var}{\total_\col} \cd \p{\sum_{i \in C} \ndraw_i^2 + \total_C^2}}{2 \var\cd \frac{\ns}{\total_\col}-\frac{\mue}{\total_\col} } $$
Next, we multiply the LHS by $\total_\col^2$ on both top and bottom and the RHS by $\total_\col$ on both top and bottom. 
$$\frac{\mue\cd\total_\col^2}{\var \cd B - \mue \cd \total_\col} \leq \frac{\mue \cd \total_\col -2 \var \cd \ns \cd \total_\col + \var \cd B}{2 \var\cd \ns-\mue } $$
Next, we cross multiply. On the lefthand side, we obtain: 
$$2 \mue \cd \total_\col^2 \cd \var \cd \ns - \mue^2 \cd \total_\col^2$$
On the righthand side: 
$$\mue \cd \total_\col \cd \var \cd B - \mue^2 \cd \total_\col^2 - 2 (\var)^2 \cd \ns \cd \total_\col \cd B + 2 \mue \cd \var \cd \ns \cd \total_\col^2 + (\var)^2 \cd B^2 - \mue \cd \var \cd B \cd \total_\col$$
$$= - \mue^2 \cd \total_\col^2 - 2 (\var)^2 \cd \ns \cd \total_\col \cd B + 2 \mue \cd \var \cd \ns \cd \total_\col^2 + (\var)^2 \cd B^2$$
Bringing the two sides together, what we wish to show is: 
$$
2\mue \cd \total_\col^2 \cd \var \cd \ns - \mue^2 \cd \total_\col^2$$
$$\leq - \mue^2 \cd \total_\col^2 - 2 (\var)^2 \cd \ns \cd \total_\col \cd B + 2 \mue \cd \var \cd \ns \cd \total_\col^2 + (\var)^2 \cd B^2$$
or: 
$$0\leq   - 2 (\var)^2 \cd \ns \cd \total_\col \cd B + (\var)^2 \cd B^2$$
Dropping the common (positive) terms of $B, (\var)^2$, gives: 
\begin{align*}
0\leq&  - 2\cd \ns \cd \total_\col +  B\\
  2\cd \ns \cd \total_\col \leq &  \sum_{i \in \col}\ndraw_i^2 + \total_\col^2
\end{align*}

We'll use the same $\col' = \col \setminus \{\ns\}$ rewriting as before. This gives us: 

\begin{align*}
2 \cd \ns \cd \total_{\col'} + 2 \cd \ns^2 \leq & \sum_{i \in \col'}\ndraw_i^2 + \ns^2 + (\total_{\col'} + \ns)^2\\
2 \cd \ns \cd \total_{\col'} + 2 \cd \ns^2 \leq& \sum_{i \in \col'}\ndraw_i^2 + \ns^2 + \total_{\col'}^2 + \ns^2 + 2 \ns \cd \total_\col'
\end{align*}
Cancelling: 
$$0 \leq  \sum_{i \in \col'}\ndraw_i^2 + \total_{\col'}^2 $$
This is always satisfied. 
This tells us that the inequality is satisfied and is strict whenever $\col' \ne \emptyset$. If $\col' = \emptyset$, then the bound for each becomes: 
$$\frac{\mue}{2 \var - \frac{\mue}{\ns}} = \nlv$$
If this equation holds exactly, the exact proportionality holds and the large player gets equal error by federating with $\ns$ or doing local learning. 
\end{proof}

\end{document}
\endinput